\documentclass{article}

\usepackage{makeidx}

\usepackage[ruled,vlined]{algorithm2e}
\usepackage{amsmath,amssymb}

\newcommand{\eps}{\varepsilon}

\newcommand{\remove}[1]{}
\newcommand{\stream}{{\cal S}}
\newcommand{\sample}{{\cal P}}

\newcommand{\cY}{\mathcal{Y}}
\newcommand{\cZ}{\mathcal{Z}}
\newcommand{\cI}{\mathcal{I}}

\newcommand{\sthminimum}{u}
\newcommand{\samplesize}{s}
\newcommand{\nepochs}{\xi}
\newcommand{\mesg}{\mu}

\newtheorem{theorem}{Theorem}

\newtheorem{lemma}{Lemma}

\newtheorem{definition}{Definition}

\newcommand{\qedsymb}{\hfill{\rule{2mm}{2mm}}}

\newenvironment{proof}{\begin{trivlist}
\item[\hspace{\labelsep}{\bf\noindent Proof: }]
}{\qedsymb\end{trivlist}}

\begin{document}

\title{Optimal Random Sampling from Distributed Streams Revisited \footnote{This writeup is a revised version of a paper with the same title and authors, which appeared in the Proceedings of the International Conference on Distributed Computing (DISC) 2011. It corrects an error in the proof of the upper bound on message complexity (Section 4). The proofs in pages 9, 10, and 11 (excluding Theorem 2) have been rewritten relative to the DISC 2011 version. None of the main theorem statements (Theorems 2,3,4) have changed from the DISC 2011 version.}
\footnote{We thank Rajesh Jayaram for pointing out the error in the conference version.}
}

\author{
Srikanta Tirthapura, Iowa State University, {\tt snt@iastate.edu}
\and
David~P.~Woodruff, Carnegie Mellon University, {\tt dwoodruf@cs.cmu.edu}
}

\date{}

\maketitle

\begin{abstract}
We give an improved algorithm for drawing a random sample from a large
data stream when the input elements are distributed across multiple sites
which communicate via a central coordinator. At any point in time the
set of elements held by the coordinator represent a uniform random
sample from the set of all the elements observed so far. When compared
with prior work, our algorithms asymptotically improve the total
number of messages sent in the system as well as the computation
required of the coordinator. We also present a matching lower bound,
showing that our protocol sends the optimal number of messages up to a
constant factor with large probability. As a byproduct, we obtain an
improved algorithm for finding the heavy hitters across multiple
distributed sites.
\end{abstract}

\section{Introduction}

For many data analysis tasks, it is impractical to collect all the
data at a single site and process it in a centralized manner. For
example, data arrives at multiple network routers at extremely high
rates, and queries are often posed on the union of data observed at
all the routers. Since the data set is changing, the query results
could also be changing continuously with time.  This has motivated the
{\it continuous, distributed, streaming model} \cite{cmy08}. In this
model there are $k$ physically distributed sites receiving high-volume
local streams of data. These sites talk to a central coordinator, who
has to continuously respond to queries over the union of all streams
observed so far. The challenge is to minimize the communication
between the different sites and the coordinator, while providing an
accurate answer to queries at the coordinator at all times.

A fundamental problem in this setting is to obtain a random sample
drawn from the union of all distributed streams. This 
generalizes the classic {\em reservoir sampling} problem (see, e.g.,
\cite{k81}, where the algorithm is attributed to Waterman; see also
\cite{Vitt85}) to the setting of multiple distributed streams, and has applications to
approximate query answering, selectivity estimation, and query
planning. For example, in the case of network routers, maintaining a
random sample from the union of the streams is valuable for network
monitoring tasks involving the detection of global properties
\cite{HNGHJJT07}. Other problems on distributed stream processing,
including the estimation of the number of distinct elements
\cite{cg05,cmy08} and heavy hitters \cite{bo03,kcr06,msdo05,yz09}, use
random sampling as a primitive. 

The study of sampling in distributed streams was initiated by Cormode
{\it et al} \cite{cmyz10}. Consider a set of $k$ different streams
observed by the $k$ sites with the total number of current items in
the union of all streams equal to $n$.  The authors in \cite{cmyz10}
show how $k$ sites can maintain a random sample of $s$ items without
replacement from the union of their streams using an expected
$O((k+s)\log n)$ messages between the sites and the central
coordinator.  The memory requirement of the central coordinator is $s$
machine words, and the time requirement is $O((k+s)\log n)$. The
memory requirement of the remote sites is a single machine word with
constant time per stream update. Cormode {\it et al}. also prove that
the expected number of messages sent in any scheme is $\Omega(k +
s \log (n/s))$.  Each message is assumed to be a single machine word,
which can hold an integer of magnitude $(kns)^{O(1)}$.\\

\noindent{\bf Notation.} All logarithms are to the base 2 unless
otherwise specified. Throughout the paper, when we use asymptotic
notation, the variable that is going to infinity is $n$, and $s$ and
$k$ are functions of $n$.

\subsection{Our Results}
Our main contribution is an algorithm for sampling without
replacement from distributed streams, as well as a matching lower bound
showing that the message complexity of our algorithm is optimal. A
summary of our results and a comparison with earlier work is shown in
Figure \ref{table:comparison}.\\

\noindent{\bf New Algorithm:}
We present an algorithm which uses an expected 
$$O \left (\frac{k \log (n/s)}{\log (1+(k/s))} \right )$$ number of
messages for continuously maintaining a random sample of size $s$ from
$k$ distributed data streams of total size $n$. Notice that if $s < k/8$, this number is
$O \left (\frac{k \log(n/s)}{\log(k/s)} \right)$, 
while if $s \geq k/8$, this number is $O(s \log (n/s))$.


The memory requirement in our protocol at the central 
coordinator is $s$ machine words, and the time requirement
is $O \left (\frac{k \log n/s}{\log (1+k/s)} \right )$. 
The former is the same as that in the protocol of 
\cite{cmyz10}, while the latter improves their $O((k+s)\log n)$ time requirement.
The remote sites in our scheme store a single machine word 
and use constant time per stream update, which is clearly optimal.

Our result leads to a significant improvement in the message
complexity in the case when $k$ is large. For example, for the basic
problem of maintaining a single random sample from the union of
distributed streams ($s=1$), our algorithm leads to a factor of
$O(\log k)$ decrease in the number of messages sent in the system over
the algorithm in \cite{cmyz10}.

Our algorithm is simple, and only requires the central coordinator to
communicate with a site if the site initiates the communication. This
is useful in a setting where a site may go offline, since it does not
require the ability of a site to receive broadcast messages.

\begin{figure*}
\begin{center}
\begin{tabular}{|c|c|c|c|c|}
\hline
& \multicolumn{2}{c|}{Upper Bound} & \multicolumn{2}{c|}{Lower Bound}\\
\cline{2-5}
& Our Result & Cormode {\em et al.} & Our Result & Cormode {\em et al.}\\
\hline\hline
$s < \frac{k}{8}$ & $O \left (\frac{k \log(n/s)}{\log(k/s)} \right)$  
                  & $O(k \log n)$ 
                  & $\Omega \left (\frac{k \log(n/s)}{\log(k/s)} \right)$ 
                  & $\Omega (k + s \log n)$ \\
\hline
$s \ge \frac{k}{8}$ & $O(s \log(n/s))$ 
                    & $O(s \log n)$
                    & $\Omega(s \log(n/s))$ 
                    & $\Omega(s \log(n/s))$ \\
\hline
\end{tabular}
\caption{Summary of Our Results for Message Complexity of Sampling Without Replacement}
\label{table:comparison}
\end{center}
\end{figure*}

\noindent{\bf Lower Bound:}
We also show that for any constant $q > 0$, any 
correct protocol must send 
$\Omega \left (\frac{k \log (n/s)}{\log (1+(k/s))} \right )$ messages with probability
at least $1-q$. This also yields a bound of 
$\Omega \left (\frac{k \log (n/s)}{\log (1+(k/s))} \right )$ on the expected message
complexity of any correct protocol, showing the expected number of
messages sent by our algorithm is optimal, upto constant factors.

In addition to being quantitatively stronger than the lower bound of
\cite{cmyz10}, our lower bound is also qualitatively stronger, because
the lower bound in \cite{cmyz10} is on the expected number of messages
transmitted in a correct protocol. However, this does not rule out the
possibility that with large probability, much fewer messages are sent
in the optimal protocol. In contrast, we lower bound the number of
messages that must be transmitted in any protocol $99\%$ of the
time. Since the time complexity of the central coordinator is at least
the number of messages received, the time complexity of our protocol
is also optimal.\\

\noindent{\bf Sampling With Replacement.}
We also show how to modify our protocol to obtain a 
random sample of $s$ items from $k$ distributed streams
with replacement. Here we achieve a protocol with
$O\left (\left(\frac{k}{\log(2+(k/(s\log s)))}  + s \log s \right) \log n \right)$ 
messages, improving the $O((k+s \log s)\log n)$-message protocol of \cite{cmyz10}. 
We obtain the same improvement in the time complexity of the central coordinator. \\

\noindent{\bf Heavy-Hitters.}
As a corollary, we obtain a protocol for estimating the heavy hitters 
in distributed streams with the best known message complexity. In this problem
we would like to find a set $H$ of items so that if an element $e$ occurs at least an $\eps$
fraction of times in the union of the streams, then $e \in H$, and if $e$ occurs less than
an $\eps/2$ fraction of times in union of the streams, then $e \notin H$. 
It is known that $O(\eps^{-2} \log n)$
random samples suffice to estimate the set of heavy hitters with high probability, 
and the previous best algorithm \cite{cmyz10} was obtained
by plugging $s = O(\eps^{-2} \log n)$ into a protocol for distributed sampling. We thus improve the 
message complexity from  $O((k+\eps^{-2} \log n)\log n)$ to 
$O\left( \frac{k \log (\eps n)}{\log (\eps k)} + \eps^{-2}\log (\eps n) \log n \right)$. 
This can be significant when $k$ is large compared to $1/\eps$. 

\subsection{Related Work}
In addition to work discussed above, other research in the
continuous distributed streaming model includes estimating frequency moments
and counting the number of distinct elements \cite{cg05,cmy08}, and
estimating the entropy \cite{abc09}.
The reservoir sampling technique has been used extensively in large
scale data mining applications, see for example \cite{DN06,MV07,A06}.
Stream sampling under sliding windows has been considered in
\cite{BOZ09,BDM02}. Deterministic algorithms for 
heavy-hitters over distributed streams, and corresponding 
lower bounds were considered in \cite{yz09}.

Stream sampling under sliding windows over distributed streams has
been considered in \cite{cmyz10}. Their algorithm for sliding windows
is already optimal upto lower-order additive terms (see Theorems 4.1
and 4.2 in \cite{cmyz10}). Hence our improved results for the
non-sliding window case do not translate into an improvement for the
case of sliding windows.

A related model of distributed streams was considered in
\cite{GT01,GT02}. In this model, the coordinator was not required to
continuously maintain an estimate of the required aggregate, but when
the query was posed to the coordinator, the sites would be contacted
and the query result would be constructed. In their model, the
coordinator could be said to be ``reactive'', whereas in the model
considered in this paper, the coordinator is ``pro-active''.\\


\noindent{\bf Roadmap:} We first present the model and problem definition in
Section \ref{sec:model}, and then the algorithm followed by a proof of
correctness in Section \ref{sec:algo}. The analysis of message
complexity and the lower bound are presented in Sections \ref{sec:ub}
and \ref{sec:lb} respectively, followed by an algorithm for sampling
with replacement in Section \ref{sec:withrep}.

\section{Model}
\label{sec:model}
Consider a system with $k$ different sites, numbered from $1$ till
$k$, each receiving a local stream of elements. Let $\stream_i$ denote
the stream observed at site $i$.  There is one ``coordinator'' node,
which is different from any of the sites. The coordinator does not
observe a local stream, but all queries for a random sample arrive at
the coordinator. Let $\stream = \cup_{i=1}^n \stream_i$ be the entire
stream observed by the system, and let $n = |\stream|$.
The sample size $\samplesize$ is a parameter supplied to the
coordinator and to the sites during initialization.  

The task of the coordinator is to continuously maintain a random
sample $\sample$ of size $\min\{n, \samplesize\}$ consisting of
elements chosen uniformly at random without replacement from
$\stream$. The cost of the protocol is the number of messages
transmitted.

We assume a synchronous communication model, where the system
progresses in ``rounds''. In each round, each site can observe one
element (or none), and send a message to the coordinator, and receive
a response from the coordinator.  The coordinator may receive up to
$k$ messages in a round, and respond to each of them in the same
round. This model is essentially identical to the model assumed in
previous work \cite{cmyz10}.  Later we discuss how to handle the case
of a site observing multiple elements per round.

The sizes of the different local streams at the sites, their order of
arrival, and the interleaving of the streams at different sites, can
all be arbitrary. The algorithm cannot make any assumption about
these.

\section{Algorithm}
\label{sec:algo}
The idea in the algorithm is as follows. Each site associates a random
``weight'' with each element that it receives. The coordinator then
maintains the set $\sample$ of $s$ elements with the minimum weights
in the union of the streams at all times, and this is a random sample
of $\stream$. This idea is similar to the spirit in all centralized
reservoir sampling algorithms. In a distributed setting, the
interesting aspect is at what times do the sites communicate with the
coordinator, and vice versa.

In our algorithm, the coordinator maintains $\sthminimum$, which is
the $s$-th smallest weight so far in the system, as well as the sample
$\sample$, consisting of all the elements that have weight no more
than $\sthminimum$. Each site need only maintain a single value
$\sthminimum_i$, which is the site's view of the $s$-th smallest
weight in the system so far. Note that it is too expensive to keep the
view of each site synchronized with the coordinator's view at all
times -- to see this, note that the value of the $s$-th smallest
weight changes $O(s \log (n/s))$ times, and updating every site each time
the $s$-th minimum changes takes a total of $O(sk \log (n/s))$ messages.

In our algorithm, when site $i$ sees an element with a weight smaller
than $\sthminimum_i$, it sends it to the central coordinator. The
coordinator updates $\sthminimum$ and $\sample$, if needed, and then
replies back to $i$ with the current value of $\sthminimum$, which is
the true minimum weight in the union of all streams. Thus each time a
site communicates with the coordinator, it either makes a change to
the random sample, or, at least, gets to refresh its view of
$\sthminimum$.

The algorithm at each site is described in Algorithms
\ref{algo:site-init} and \ref{algo:site-process}. 
The algorithm at the coordinator is described in Algorithm \ref{algo:coord}.

\begin{algorithm}[H]
\caption{Initialization at Site $i$.}
\label{algo:site-init}
\tcc{$\sthminimum_i$ is site $i$'s view of the $s$-th smallest weight
in the union of all streams so far. Note this may ``lag'' the value
stored at the coordinator.}
$\sthminimum_i \gets 1$\;
\end{algorithm}

\begin{algorithm}[H]
\caption{When Site $i$ receives element $e$.}
\label{algo:site-process}

Let $w(e)$ be a randomly chosen weight between $0$ and $1$\;

\If{$w(e) < \sthminimum_i$}
{
   Send $(e, w(e))$ to the Coordinator and receive $\sthminimum'$ from
   Coordinator\;

   Set $\sthminimum_i \gets \sthminimum'$\;
}
\end{algorithm}

\begin{algorithm}[H]
\caption{Algorithm at Coordinator.}
\label{algo:coord}

\tcc{The random sample $\sample$ consists of tuples $(e,w)$ where $e$ is an element, and
$w$ the weight, such that the weights are the $s$ smallest
among all the weights so far in the stream}
$\sample \gets \phi$\;

\tcc{$\sthminimum$ is the value of the $s$-th smallest weight in the
stream observed so far. If there are less than $s$ elements so far, 
then $\sthminimum$ is $1$.}
$\sthminimum \gets 1$\;

\While{true}
{
  \If{a message $(e_i, \sthminimum_i)$ arrives from site $i$}
  {
    \If{$\sthminimum_i < \sthminimum$}
    {
      Insert $(e_i, \sthminimum_i)$ into $\sample$\;

      \If {$|\sample| > \samplesize$}
      {
        Discard the element $(e,w)$ from $\sample$ with the largest weight\;

        Update $\sthminimum$ to the current largest weight in $\sample$ (which
        is also the $\samplesize$-th smallest weight in the entire stream)\;
      }
    }

    Send $\sthminimum$ to site $i$\;
  }

  \If{a query for a random sample arrives}{return $\sample$}
}
\end{algorithm}

\subsection{Correctness}

The following two lemmas establish the correctness of the algorithm.

\begin{lemma}
\label{lem:s-smallest}
Let $n$ be the number of elements in $\stream$ so far. (1) If $n \le s$, then the set $\sample$ 
at the coordinator contains all the $(e,w)$ pairs seen at all the sites so far.
(2) If $n > s$, then $\sample$ at the coordinator consists of 
the $\samplesize$ $(e,w)$ pairs such that the weights of the pairs in $\sample$
are the smallest weights in the stream so far.
\end{lemma}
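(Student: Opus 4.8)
The plan is to prove both parts simultaneously by induction on the number of rounds (equivalently, on the number of stream elements processed so far), strengthening the statement of Lemma~\ref{lem:s-smallest} with two auxiliary invariants that capture why a lagging site never causes a ``deserving'' element to be dropped. Concretely, I would carry along the following strengthened hypothesis: after any round, (i) $\sthminimum$ equals the true $\samplesize$-th smallest weight seen so far in $\stream$ when $n > \samplesize$, and equals $1$ when $n \le \samplesize$; (ii) $\sample$ is exactly the set of $(e,w)$ pairs whose weights are the $\min\{n,\samplesize\}$ smallest; (iii) $\sthminimum$ is non-increasing over time; and (iv) for every site $i$, its local view satisfies $\sthminimum_i \ge \sthminimum$. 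Parts (1) and (2) of the lemma are then just (i) and (ii).

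The two structural invariants (iii) and (iv) are the engine of the argument, so I would establish them first. For (iii): the coordinator only modifies $\sthminimum$ inside the branch where $|\sample| > \samplesize$, at which point it has just inserted an element of weight $w < \sthminimum$ and then discards the element of largest weight; since the old maximum (weight $\sthminimum$) is discarded and the inserted weight is smaller, the new maximum---the new $\sthminimum$---is strictly smaller than the old one. Hence $\sthminimum$ only ever decreases. Invariant (iv) then follows immediately: each $\sthminimum_i$ holds the value of $\sthminimum$ read off the coordinator at the last round site $i$ communicated (or the initial value $1$), and because $\sthminimum$ is non-increasing by (iii), that stored value is at least the current $\sthminimum$.

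With (iii) and (iv) in hand, the inductive step for (i) and (ii) reduces to a case analysis on a single element $e$ with weight $w(e)$ arriving at site $i$; since weights are drawn continuously from $(0,1)$, I may assume all weights are distinct (ties have probability $0$). The key completeness claim is that if $w(e)$ belongs to the $\min\{n,\samplesize\}$ smallest weights, then $e$ is both sent and inserted: being among the smallest forces $w(e) < \sthminimum$ at the coordinator (using $\sthminimum = 1$ when $n \le \samplesize$, and monotonicity of the $\samplesize$-th smallest weight otherwise), and then (iv) gives $w(e) < \sthminimum \le \sthminimum_i$, so the site transmits $e$; the coordinator's test $w(e) < \sthminimum$ then succeeds and $e$ is inserted, after which the discard step restores $|\sample| = \samplesize$ and updates $\sthminimum$ consistently with (i). Conversely, an element whose weight is not among the smallest has $w(e) > \sthminimum$ at the coordinator, so even if a stale $\sthminimum_i$ caused it to be transmitted, the coordinator's test fails and it is merely used to refresh the site's view---it never pollutes $\sample$. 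Together these show $\sample$ and $\sthminimum$ are updated to the correct new values.

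The main obstacle I anticipate is the concurrency permitted by the synchronous model: within one round up to $\samplesize$-many (in fact up to $k$) sites may transmit, and the coordinator processes and replies to all of them in that round, so I cannot treat a round as a single element update. I would handle this by serializing the within-round messages in an arbitrary order and arguing that the strengthened invariant is preserved after each individual message, noting that (iii) keeps $\sthminimum$ monotone \emph{throughout} the round so that (iv) continues to hold for sites processed later in the same round; because the final contents of $\sample$ are characterized purely as ``the $\samplesize$ smallest weights,'' the particular serialization order does not affect the resulting set. A secondary point to verify cleanly is the boundary between the $n \le \samplesize$ and $n > \samplesize$ regimes---precisely the round in which $|\sample|$ first exceeds $\samplesize$ and $\sthminimum$ first drops below $1$---which the induction handles as the transition from the $n \le \samplesize$ branch of the invariant to the $n > \samplesize$ branch.
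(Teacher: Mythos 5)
Your proposal is correct and follows essentially the same route as the paper's proof: the two load-bearing facts are identical (monotonicity of $\sthminimum$, and $\sthminimum_i \ge \sthminimum$ for every site, which together guarantee that any element with $w(e) < \sthminimum$ is transmitted), with your induction-on-rounds scaffolding and within-round serialization being a more explicit write-up of the same argument.
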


\begin{proof}
The variable $\sthminimum$ is stored at the coordinator, and
$\sthminimum_i$ is stored at site $i$.  First we note that the
variables $\sthminimum$ and $\sthminimum_i$ are non-increasing with
time; this can be verified from the algorithms.

Next, we note that for every $i$ from $1$ till $k$, at every round,
$\sthminimum_i \ge \sthminimum$.  This can be seen because initially,
$\sthminimum_i = \sthminimum = 1$, and $\sthminimum_i$ changes only in
response to receiving $\sthminimum$ from the coordinator.

Thus, if fewer than $s$ elements have appeared in the stream so far,
$\sthminimum$ is $1$, and hence $\sthminimum_i$ is also $1$ for each
site $i$.  The next element observed in the system is also sent to the
coordinator.  Thus, if $n \le s$, then the set $\sample$ consists of
all elements seen so far in the system.

Next, we consider $n > s$. Note that $\sthminimum$
maintains the $s$-th smallest weight seen at the coordinator, and $\sample$
consists of the $s$ elements seen at the coordinator with the smallest weights.
We only have to show that if an element $e$, observed at site $i$
is such that $w(e) < \sthminimum$
then $i$ must have sent $(e,w(e))$ to the coordinator. This follows because 
$\sthminimum_i \ge \sthminimum$ at all times, and if $w(e) < \sthminimum$, then
it must be true that $w(e) < \sthminimum_i$, and in this case, $(e,w(e))$ is
sent to the coordinator.
 
\end{proof}

\begin{lemma}
\label{lem:correctness}
At the end of each round, sample $\sample$ at the coordinator consists
of a uniform random sample of size $\min\{n,\samplesize\}$ chosen
without replacement from $\stream$.
\end{lemma}

\begin{proof}
In case $n \le \samplesize$, then from Lemma \ref{lem:s-smallest}, we know that 
$\sample$ contains every element of $\stream$. In case $n > \samplesize$, from 
Lemma \ref{lem:s-smallest}, it follows that $\sample$ consists of $\samplesize$ elements
with the smallest weights from $\stream$. Since the weights are assigned randomly,
each element in $\stream$ has a probability of $\frac{\samplesize}{n}$ of belonging in
$\sample$, showing that this is an uniform random sample. Since an
element can appear no more than once in the sample, this is a sample
chosen without replacement.
 
\end{proof}

\section{Analysis of the Algorithm (Upper Bound)}
\label{sec:ub}
We now analyze the message complexity of the maintenance of a random
sample. 

For the sake of analysis, we divide the execution of the
algorithm into ``epochs'', where each epoch consists of a sequence of
rounds. The epochs are defined inductively. Let $r > 1$ be a
parameter, which will be fixed later.  Recall that $\sthminimum$ 
is the $s$-th smallest weight so far in the system (if there are fewer
than $s$ elements so far, $\sthminimum=1$).
Epoch $0$ is the set of all
rounds from the beginning of execution until (and including) the
earliest round where $\sthminimum$ is $\frac{1}{r}$ or smaller. Let
$m_i$ denote the value of $\sthminimum$ at the end of epoch
$i-1$. Then epoch $i$ consists of all rounds subsequent to epoch $i-1$
until (and including) the earliest round when $\sthminimum$ is $\frac{m_i}{r}$
or smaller. Note that the algorithm does not need to be aware of the
epochs, and this is only used for the analysis.

Suppose we call the original distributed algorithm described in Algorithms \ref{algo:coord} and
\ref{algo:site-process} as Algorithm $A$.
For the analysis, we consider a slightly different distributed
algorithm, Algorithm $B$, described below. {\em Algorithm $B$ is
identical to Algorithm $A$ except for the fact that at the beginning
of each epoch, the value $\sthminimum$ is broadcast by the coordinator
to all sites.}

While Algorithm $A$ is natural, Algorithm $B$ is easier to analyze.
We first note that on the same inputs, the value of $\sthminimum$ (and
$\sample$) at the coordinator at any round in Algorithm $B$ is
identical to the value of $\sthminimum$ (and $\sample$) at the
coordinator in Algorithm $A$ at the same round. Hence, the
partitioning of rounds into epochs is the same for both algorithms,
for a given input. The correctness of Algorithm $B$ follows from the
correctness of Algorithm $A$.  The only difference between them is in
the total number of messages sent. In $B$ we have the property that
for all $i$ from $1$ to $k$, $\sthminimum_i = \sthminimum$ at the
beginning of each epoch (though this is not necessarily true
throughout the epoch), and for this, $B$ has to pay a cost of at least
$k$ messages in each epoch.

\begin{lemma}
The number of messages sent by Algorithm $A$ for a set of input
streams $\stream_j, j=1\ldots k$ is never more than twice the number
of messages sent by Algorithm $B$ for the same input.
\end{lemma}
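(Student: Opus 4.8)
The plan is to charge the messages of $A$ against those of $B$ on a per-site, per-epoch basis. Let $M_A$ and $M_B$ denote the total number of messages sent by the two algorithms on a fixed input, and let $\nepochs$ be the number of epochs. Every site-initiated exchange consists of one message to the coordinator and one reply, so it costs exactly two messages, while the only additional messages in $B$ are the epoch-start broadcasts, which cost $k$ messages per epoch and $k\nepochs$ in total. Writing $c_A(i,j)$ and $c_B(i,j)$ for the number of site-initiated exchanges made by site $i$ during epoch $j$, we have $M_A = 2\sum_{i,j} c_A(i,j)$ and $M_B = 2\sum_{i,j} c_B(i,j) + k\nepochs$. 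Thus it suffices to prove the per-epoch bound $c_A(i,j)\le c_B(i,j)+1$: summing over the $k$ sites and $\nepochs$ epochs gives $\sum_{i,j} c_A(i,j)\le \sum_{i,j} c_B(i,j) + k\nepochs$, whence $M_A \le 2\sum_{i,j} c_B(i,j) + 2k\nepochs \le 2M_B$, using $2M_B = 4\sum_{i,j} c_B(i,j) + 2k\nepochs$.

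To set up the per-epoch bound, fix a site $i$ and an epoch $j$. By the fact already noted in this section, the coordinator's value $\sthminimum$ is identical in $A$ and $B$ at every round and is non-increasing in time; call it $g(t)$. Both local views $\sthminimum_i^A$ and $\sthminimum_i^B$ always equal some earlier value of $g$, hence both are at least $g(t)$ at the current round. At the start of epoch $j$ the broadcast in $B$ sets $\sthminimum_i^B = m_j$, the coordinator value at that round, while $\sthminimum_i^A \ge m_j$; within the epoch neither view is reset, and each changes only when its site sends, which resets it to the current $g(t)$.

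The crux is to compare the two send sequences despite the fact that the two views cross each other over the epoch; in particular the naive guess that $\sthminimum_i^A \ge \sthminimum_i^B$ holds throughout is \emph{false}, since an $A$-send can drop $\sthminimum_i^A$ below $\sthminimum_i^B$. I would classify each element site $i$ sees in the epoch by who sends on it, namely both (matched), $A$ only, $B$ only, or neither, and note that $c_A(i,j)-c_B(i,j)$ equals the number of $A$-only sends minus the number of $B$-only sends. To control this I would track the sign potential $x = \mathbf{1}[\sthminimum_i^A > \sthminimum_i^B]$. An $A$-only send needs an element of weight $w$ with $\sthminimum_i^B \le w < \sthminimum_i^A$, so it requires $x=1$ beforehand and, since it resets $\sthminimum_i^A$ to $g(t)\le \sthminimum_i^B$, forces $x\to 0$. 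Symmetrically, a $B$-only send requires $x=0$ and can only raise $x$ to $1$, while a matched send sets $x=0$. Since $x$ equals $1$ initially at most once and is set to $1$ thereafter only by a $B$-only send, every $A$-only send after the first must be preceded by a distinct $B$-only send; hence the number of $A$-only sends is at most one more than the number of $B$-only sends, which is exactly $c_A(i,j)\le c_B(i,j)+1$.

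The main obstacle is precisely this last step: because the local views of $A$ and $B$ interleave rather than one dominating the other, a direct element-by-element domination argument fails, and one needs the sign-potential bookkeeping to show that $A$ can fall at most one send ahead of $B$ within an epoch before the next $B$-send (or the next epoch's broadcast) forces a resynchronization. Everything else, namely the two-messages-per-exchange accounting and the $k\nepochs$ broadcast count, is routine; the only boundary point I would double-check is that $\sthminimum_i^A \ge m_j$ at the start of each epoch, which holds because $\sthminimum_i^A$ is a past coordinator value and is therefore at least the current one, so that $x$ is well-initialized in $\{0,1\}$ at the start of every epoch.
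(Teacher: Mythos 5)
Your proof is correct, and its overall accounting---two messages per site-initiated exchange, $k\nepochs$ broadcast messages for $B$, and the per-site, per-epoch bound $c_A(i,j)\le c_B(i,j)+1$---is the same charging scheme as the paper's Case I / Case II argument, in which the single extra exchange that $A$ may incur in an epoch is paid for by $B$'s broadcast to that site. Where you go beyond the paper is at the step the paper asserts without justification: that after site $v$'s first exchange of an epoch in Algorithm $A$, ``the number of messages sent in Algorithm $A$ is no more than those sent in $B$.'' As you correctly observe, this does not follow from pointwise domination of the two local views: an exchange in either algorithm resets that algorithm's copy of $\sthminimum_i$ to the current (common) coordinator value, so whichever view was refreshed more recently is the smaller, and the two views genuinely cross during an epoch. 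Your indicator potential $x=\mathbf{1}[\sthminimum_i \text{ in } A > \sthminimum_i \text{ in } B]$---an $A$-only send requires $x=1$ and forces $x=0$, only a $B$-only send can restore $x=1$, and $x$ starts the epoch in $\{0,1\}$---cleanly converts this crossing behaviour into the counting bound $\#(A\text{-only})\le\#(B\text{-only})+1$; an equivalent route is to show that the refresh times of the two algorithms interleave. So your proposal is not a different proof so much as a completed one: it supplies the alternation argument on which the paper's ``henceforth'' sentence implicitly relies, and the remaining bookkeeping agrees with the paper's.
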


\begin{proof}
Consider site $v$ in a particular epoch $i$. 
In Algorithm $B$, $v$ receives $m_i$ at the beginning of
the epoch through a message from the coordinator. 
In Algorithm $A$, $v$ may not know $m_i$ at the beginning of epoch $i$. 
We consider two cases.

Case I: $v$ sends a message to the coordinator in epoch $i$ in
Algorithm $A$.  In this case, the first time $v$ sends a message to
the coordinator in this epoch, $v$ will receive the current value of
$\sthminimum$, which is smaller than or equal to $m_i$. This communication costs two
messages, one in each direction.  Henceforth, in this epoch, the
number of messages sent in Algorithm $A$ is no more than those sent in
$B$. In this epoch, the number of messages transmitted to/from $v$ in
$A$ is at most twice the number of messages as in $B$, which has at
least one transmission from the coordinator to site $v$.

Case II: $v$ did not send a message to the coordinator in this epoch,
in Algorithm $A$. In this case, the number of messages sent in this
epoch to/from site $v$ in Algorithm $A$ is smaller than in Algorithm
$B$.
 
\end{proof}

Let $\nepochs$ denote the total number of epochs.

\begin{lemma}
\label{lem:num-epochs}
If $r \ge 2$, 
$$
E[\nepochs] \le \left(\frac{\log(n/\samplesize)}{\log r}\right) + 2
$$
\end{lemma}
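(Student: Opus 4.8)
The plan is to reduce the bound on $E[\nepochs]$ to a \emph{deterministic} (pointwise) inequality relating the total number of epochs to the final value of $\sthminimum$, and then to evaluate a single expectation of an order statistic. Observe that once all $n$ elements have arrived, $\sthminimum$ equals the $\samplesize$-th smallest among $n$ independent weights drawn uniformly from $[0,1]$; denote this order statistic by $W$. Since $\sthminimum$ is non-increasing (as established in the proof of Lemma \ref{lem:s-smallest}), stays equal to $1$ until $\samplesize$ elements have been seen, and depends only on the multiset of weights rather than on the arrival order, $W$ carries all the information needed, and the adversarial interleaving of the streams plays no role in the final estimate.

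First I would establish, for every realization of the weights, the pointwise bound $\nepochs \le 1 + \log_r(1/W)$. The key is the claim that $m_i \le r^{-i}$ for every $i \ge 1$, proved by induction: epoch $0$ ends the first time $\sthminimum \le 1/r$, so $m_1 \le 1/r$; and since epoch $i$ ends the first time $\sthminimum \le m_i/r$, we get $m_{i+1} \le m_i/r \le r^{-(i+1)}$. Consequently, if epoch $j$ is ever entered then epoch $j-1$ has completed, which forces $\sthminimum$ to have dropped to at most $m_{j-1}/r \le r^{-j}$ at that point; because $\sthminimum$ is non-increasing, the final value satisfies $W \le r^{-j}$. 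Taking $j = \nepochs - 1$ yields $W \le r^{-(\nepochs-1)}$, i.e.\ $\nepochs \le 1 + \log_r(1/W)$. I would flag the one subtlety here: $\sthminimum$ may jump far below an epoch threshold in a single step, so the epochs are not exactly geometric in length; the one-sided induction $m_i \le r^{-i}$ is precisely what sidesteps this.

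Taking expectations and writing $\log_r(1/W) = -\ln W/\ln r$ gives $E[\nepochs] \le 1 + E[-\ln W]/\ln r$, so everything reduces to computing $E[-\ln W]$. Here I would use that the $\samplesize$-th order statistic of $n$ uniforms is $\mathrm{Beta}(\samplesize, n-\samplesize+1)$; equivalently, writing $W$ as a product of independent ratios whose logarithms are exponential (the exponential-spacings representation of uniform order statistics) yields the clean identity $E[-\ln W] = \sum_{i=\samplesize}^{n} 1/i = H_n - H_{\samplesize-1}$, where $H_m = \sum_{\ell=1}^{m} 1/\ell$.

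Finally I would estimate the harmonic sum. Using $\sum_{i=\samplesize}^{n} 1/i \le 1/\samplesize + \int_{\samplesize}^{n} dx/x = 1/\samplesize + \ln(n/\samplesize)$ together with $r \ge 2$ (so that $\ln r \ge \ln 2 \ge 1/\samplesize$ for $\samplesize \ge 2$, with the $\samplesize = 1$ case following from $H_n = \ln n + \gamma + o(1)$ and $\gamma < \ln 2$), I obtain $E[-\ln W] \le \ln(n/\samplesize) + \ln r$. Dividing by $\ln r$ and adding $1$ gives $E[\nepochs] \le \log_r(n/\samplesize) + 2 = \log(n/\samplesize)/\log r + 2$, as required. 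The main obstacle is not any single step but pinning the additive constant down to exactly $2$: this is what forces the use of the exact harmonic identity, since a crude Chernoff tail bound on $\mathrm{Bin}(n, r^{-j}) \ge \samplesize$ summed over $j$ reproduces the leading term but loses control of the constant precisely in the regime $n r^{-j} \approx \samplesize$; it is also where the hypothesis $r \ge 2$ enters.
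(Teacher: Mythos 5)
Your proof is essentially correct, but it takes a genuinely different route from the paper's. The paper bounds the tail directly: it observes that $\nepochs \ge z+\ell$ (with $z = \log(n/s)/\log r$) forces $\sthminimum \le \frac{s}{n r^{\ell}}$, hence forces the binomial count $Y$ of weights below that threshold (with ${\bf E}[Y] = s/r^{\ell}$) to reach $s$, and then applies Markov's inequality to get $\Pr[\nepochs \ge z+\ell] \le r^{-\ell}$ and sums the geometric tail to obtain $z + \frac{1}{1-1/r} \le z+2$. You instead prove the deterministic pointwise inequality $\nepochs \le 1 + \log_r(1/W)$ (via the same one-sided induction $m_i \le r^{-i}$ that the paper uses implicitly) and then compute ${\bf E}[-\ln W]$ exactly as the harmonic sum $\sum_{i=s}^{n} 1/i$ using the exponential-spacings representation of the $s$-th uniform order statistic. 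Your route cleanly isolates the only source of randomness (the final order statistic $W$) and makes the irrelevance of the adversarial interleaving explicit; it also gives an essentially sharp handle on ${\bf E}[\nepochs]$ rather than just an upper bound. The paper's route is more elementary (no order-statistics identities needed) and, as a byproduct, yields tail bounds on $\nepochs$ and not merely its expectation, which is occasionally what one wants from an epoch count.

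Two small quibbles, neither fatal. First, your harmonic estimate ${\bf E}[-\ln W] \le \ln(n/s) + \ln r$ reduces for $s=1$ to $H_n \le \ln n + \ln 2$, which fails for $n \le 4$ (e.g.\ $H_1 = 1 > \ln 2$); those finitely many cases need a direct check (for $n\le 4$ and $s=1$ one verifies $\nepochs \le 2$ directly from the epoch definition), or you can absorb them by noting the lemma is only used asymptotically. Second, both your argument and the lemma statement implicitly assume $n \ge s$ so that $W$ is a genuine order statistic; this is shared with the paper and is not a defect of your write-up specifically.
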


\begin{proof}
Let $z = \left(\frac{\log(n/\samplesize)}{\log r}\right)$. First, we
note that in each epoch, $\sthminimum$ decreases by a factor of at
least $r$. Thus after $(z+\ell)$ epochs, $\sthminimum$ is no more than
$\frac{1}{r^{z+\ell}} = (\frac{s}{n}) \frac{1}{r^\ell}$.
Thus, we have

$$
\Pr[\nepochs \ge z+\ell] 
  \le \Pr\left[\sthminimum \le \left(\frac{s}{n}\right) \frac{1}{r^\ell}\right]
$$

Let $Y$ denote the number of elements (out of $n$) that have been
assigned a weight of $\frac{s}{n r^{\ell}}$ or lesser. $Y$ is a
binomial random variable with expectation $\frac{s}{r^{\ell}}$.
Note that if $\sthminimum \le \frac{s}{n r^{\ell}}$, it must be true
that $Y \ge s$.

$$
\Pr[\nepochs \ge z + \ell] 
  \le \Pr[Y \ge s] 
  \le \Pr[Y \ge r^{\ell}E[Y]] 
  \le \frac{1}{r^\ell}
$$
where we have used Markov's inequality.
 
Since $\nepochs$ takes only positive integral values,
\begin{eqnarray*}
E[\nepochs] &   = & \sum_{i>0} \Pr[\nepochs \ge i]  
                =  \sum_{i=1}^z \Pr[\nepochs \ge i]
                   + \sum_{\ell \ge 1} \Pr[\nepochs \ge z + \ell] \\
            & \le & z + \sum_{\ell \ge 1} \frac{1}{r^\ell} 
              \le z + \frac{1}{1-1/r}
              \le z + 2\\
\end{eqnarray*}
where we have assumed $r \ge 2$.
 
\end{proof}


Let $\mesg$ denote the total number of messages sent during the entire execution.  Let $\mesg_i$ denote the total number of messages sent in epoch $i$. Let $X_i$ denote the number of messages sent from the sites to the coordinator in epoch $i$.  $\mesg_i$ is the sum of two parts, (1)\ $k$ messages sent by the coordinator at the start of the epoch, and (2)\ two times the number of messages sent from the sites to the coordinator.

\begin{equation}
\label{eqn:mesgi}
\mesg_i = k + 2X_i
\end{equation}

\begin{equation}
\label{eqn:mesg}
\mesg = \sum_{j=0}^{\nepochs-1} \mesg_i = \nepochs k + 2\sum_{j=0}^{\nepochs-1} X_j
\end{equation}

For epoch $i$, consider the stochastic process $\cY = \{Y_j : j \ge 1\}$. For each $j$, choose a random number $w_j$ uniformly from $(0,1)$.
\[
Y_j = 
\begin{cases}
0 & \text{if  }  w_j \ge m_i \\
1 & \text{if  }  m_i/r < w_j < m_i \\
2 & \text{if  }  m_i/r \le w_j
\end{cases}
\]
Let $\tau$ denote the smallest time $t$ such that there are at least $s$ elements of $\{Y_1, Y_2,\ldots, Y_t\}$ that are equal to $2$. Let $Y = \sum_{j=1}^\tau Y_j$. 

\begin{lemma}
\label{lem:xiy}
\[
X_i < Y
\]
\end{lemma}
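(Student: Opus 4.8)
The plan is to prove the inequality by a pathwise coupling. Run the process $\cY$ on exactly the (fresh, i.i.d.\ uniform on $(0,1)$) weights of the elements that arrive during epoch $i$ of Algorithm $B$, in arrival order, so that the $j$-th arrival carries weight $w_j$. I will then show, arrival by arrival, that the number of messages the $j$-th arrival contributes to $X_i$ is at most $Y_j$, and is \emph{strictly} less than $Y_j$ whenever $w_j \le m_i/r$. Since a site sends at most one message for any single element it receives, each arrival contributes at most $1$ to $X_i$, so it suffices to dominate these $\{0,1\}$-valued contributions by the $Y_j$.

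The heart of the argument is a three-case analysis keyed to $m_i$, the value of $\sthminimum$ at the start of the epoch. Here it is essential that we analyze Algorithm $B$: the epoch-opening broadcast forces $\sthminimum_i = m_i$ at every site at the start of epoch $i$, and since $\sthminimum_i$ is non-increasing we have $\sthminimum_i \le m_i$ throughout the epoch. Consequently (i) an arrival with $w_j \ge m_i$ (so $Y_j = 0$) never meets the sending condition $w_j < \sthminimum_i$ and contributes $0 = Y_j$; (ii) an arrival with $m_i/r < w_j < m_i$ (so $Y_j = 1$) contributes at most $1 = Y_j$; and (iii) an arrival with $w_j \le m_i/r$ (so $Y_j = 2$, reading the third line of the definition as the complementary case $w_j \le m_i/r$) contributes at most $1 < 2 = Y_j$. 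Thus each arrival's contribution to $X_i$ is dominated by $Y_j$, strictly so in case (iii).

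To finish, let $T_i$ be the number of arrivals in epoch $i$, so that $X_i \le \sum_{j=1}^{T_i} Y_j$. I will argue that $T_i \le \tau$ and that at least one of the first $T_i$ arrivals falls in case (iii); since $Y_j \ge 0$, these give $X_i < \sum_{j=1}^{T_i} Y_j \le \sum_{j=1}^{\tau} Y_j = Y$. For $T_i \le \tau$: the epoch closes at the first round in which the global $s$-th smallest weight drops to $m_i/r$ or below, i.e.\ when the total number of weights $\le m_i/r$ first reaches $s$. At the start of the epoch $\sthminimum = m_i > m_i/r$, so the number $a_i$ of such small weights already present satisfies $a_i < s$; hence the epoch needs only $s - a_i \ge 1$ \emph{new} arrivals of weight $\le m_i/r$ to close. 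Since $\tau$ is the first time $s$ of the $Y_j$ equal $2$, i.e.\ the first time $s$ new case-(iii) arrivals occur, we get $T_i \le \tau$, and because $s - a_i \ge 1$ at least one case-(iii) arrival lies within the epoch, supplying the strict inequality.

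The step I expect to be the main obstacle is making this coupling rigorous: I must justify that the weights of epoch-$i$ arrivals are i.i.d.\ uniform and independent of the history that fixes $m_i$ and $a_i$ (so that conditioning on the start-of-epoch state legitimizes running $\cY$ on them), and I must confirm that the arrival which actually closes the epoch still triggers a message, since the coordinator can lower $\sthminimum$ only upon receiving a site's message; this keeps the tally of $X_i$ over the first $T_i$ arrivals correct. The comparison $T_i \le \tau$ — resting on the observation that small-weight elements carried over from earlier epochs let the real epoch close no later than the idealized process — is the one genuinely non-obvious point, as the off-by-$a_i$ slack is precisely what must be tracked.
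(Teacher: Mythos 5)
Your proposal is correct and follows essentially the same route as the paper's proof: the same element-by-element correspondence showing that a message for the $j$-th arrival forces $Y_j \ge 1$ (since $w_j < \sthminimum_i \le m_i$ under Algorithm $B$'s epoch-opening broadcast), combined with the observation that the number of arrivals in the epoch is at most $\tau$ because $s$ new weights below $m_i/r$ suffice to close the epoch. Your treatment is somewhat more careful than the paper's — in particular you explicitly supply the case-(iii) arrival that makes the inequality strict, and you correctly read the third line of the definition of $Y_j$ as the complementary case $w_j \le m_i/r$ — but these are refinements of the same argument, not a different one.
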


\begin{proof}
Consider the correspondence between the $j$th element received in epoch $i$ and $Y_j$. Each time a message is sent upon receiving the $j$th element, it must be true that $Y_j \ge 1$, since the random weight chosen $w_j$ must be less than $m_i$ for a message to be sent (note that the threshold could be stricter than $m_i$). Further, the number of elements in this epoch is less than or equal to $\tau$, since by the time $s$ elements are seen, each with a weight less than $m_i/r$, the epoch would have ended (it may have ended earlier).

\end{proof}

Consider the conditional random variables $Y_j(\alpha) = (Y_j | m_i=\alpha)$, $Y(\alpha) = (Y | m_i = \alpha)$, and $\tau(\alpha) = (\tau | m_i = \alpha)$.
\[
Y(\alpha) = \sum_{j=1}^{\tau(\alpha)} Y_j(\alpha)
\]

\begin{definition}
Let $\cZ = \{Z_n : n \ge 1\}$ be a stochastic process. A stopping time $\theta$ with respect to $\cZ$  is a random time such that for each $n \ge 0$, the event $\{\theta = n\}$ is completely determined by the total information known up to time $n$, i.e. $\{Z_1, Z_2,\ldots,Z_n\}$.
\end{definition}

\begin{theorem}[Wald's Equation]
\label{thm:wald}
If $\theta$ is a stopping time with respect to an i.i.d. sequence $\{Z_n : n \ge 1\}$ and if $E[\theta] < \infty$ and $E[|X|] < \infty$, then
\[
E\left[ \sum_{n=1}^\theta Z_n \right] = E[\theta] E[X]
\]
\end{theorem}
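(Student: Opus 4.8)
The plan is to prove Wald's equation via the standard ``indicator decomposition'', reducing the expectation of a random sum to a sum of expectations that can be evaluated using the stopping-time structure. First I would write the random sum as an infinite series with indicator cutoffs,
$$\sum_{n=1}^{\theta} Z_n = \sum_{n=1}^{\infty} Z_n \, \mathbf{1}[\theta \ge n],$$
which holds pointwise for each realization (using that $\theta < \infty$ almost surely, a consequence of $E[\theta] < \infty$). Taking expectations, the entire proof comes down to justifying the interchange $E\!\left[\sum_n Z_n \mathbf{1}[\theta \ge n]\right] = \sum_n E[Z_n \mathbf{1}[\theta \ge n]]$ and then evaluating each term.

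The central structural observation is that $\{\theta \ge n\} = \{\theta \le n-1\}^c$ is, by the definition of a stopping time, completely determined by $Z_1,\ldots,Z_{n-1}$, and is therefore independent of $Z_n$ (here I use that the $Z_n$ are i.i.d., so $E[Z_n] = E[X]$). Consequently
$$E[Z_n \, \mathbf{1}[\theta \ge n]] = E[Z_n]\,\Pr[\theta \ge n] = E[X]\,\Pr[\theta \ge n],$$
and the same factorization applied to $|Z_n|$ gives $E[|Z_n|\, \mathbf{1}[\theta \ge n]] = E[|X|]\,\Pr[\theta \ge n]$.

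To justify the interchange rigorously I would first run a Tonelli argument on the absolute values: $\sum_{n \ge 1} E[|Z_n|\, \mathbf{1}[\theta \ge n]] = E[|X|] \sum_{n \ge 1} \Pr[\theta \ge n] = E[|X|]\,E[\theta]$, which is finite by the hypotheses $E[|X|] < \infty$ and $E[\theta] < \infty$. This absolute summability legitimizes swapping expectation and summation (by Fubini, or equivalently dominated convergence applied to the partial sums). Applying the factorization term by term then yields
$$E\!\left[\sum_{n=1}^{\theta} Z_n\right] = \sum_{n \ge 1} E[X]\,\Pr[\theta \ge n] = E[X] \sum_{n \ge 1} \Pr[\theta \ge n] = E[X]\,E[\theta],$$
where the last equality is the standard tail-sum identity $E[\theta] = \sum_{n \ge 1} \Pr[\theta \ge n]$ for a nonnegative integer-valued random variable.

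The main obstacle is precisely the interchange of expectation and the infinite sum: it is exactly here that all three hypotheses ($\theta$ a stopping time, $E[\theta] < \infty$, and $E[|X|] < \infty$) are used together, and skipping the absolute-summability check is the usual source of error. Everything else --- the pointwise decomposition, the independence factorization, and the tail-sum identity --- is routine once that step is in place.
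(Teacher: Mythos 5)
Your proof is correct and complete: the indicator decomposition, the observation that $\{\theta \ge n\}$ is determined by $Z_1,\ldots,Z_{n-1}$ and hence independent of $Z_n$, the Tonelli/Fubini justification of the interchange, and the tail-sum identity are exactly the standard argument for Wald's equation. Note that the paper itself states this theorem without proof, citing it as a classical result used only as a tool in Lemma~\ref{lem:y-alpha}, so there is no proof in the paper to compare against; your write-up supplies the standard textbook derivation and correctly identifies the interchange of expectation and infinite sum as the step where all three hypotheses are needed.
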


\begin{lemma}
\label{lem:y-alpha}
\[
E[Y(\alpha)] = (r+1)s
\]
\end{lemma}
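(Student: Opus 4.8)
The plan is to apply Wald's equation (Theorem \ref{thm:wald}) to the i.i.d.\ sequence $\{Y_j(\alpha) : j \ge 1\}$ with stopping time $\tau(\alpha)$, which reduces the claim to computing the two factors $E[\tau(\alpha)]$ and $E[Y_1(\alpha)]$ separately. Before invoking Wald, I would verify its hypotheses. First, $\tau(\alpha)$ is genuinely a stopping time with respect to $\{Y_j(\alpha)\}$: the event $\{\tau(\alpha) = t\}$ says that the $t$-th term is the $s$-th value equal to $2$, which is determined entirely by $Y_1(\alpha),\ldots,Y_t(\alpha)$. Second, the summand is bounded, $Y_j(\alpha) \le 2$, so $E[|Y_1(\alpha)|] \le 2 < \infty$. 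Third, for $E[\tau(\alpha)] < \infty$ I would observe that $\tau(\alpha)$ is the waiting time for the $s$-th occurrence of the event $\{Y_j(\alpha) = 2\} = \{w_j < \alpha/r\}$, which occurs independently in each trial with probability $p := \alpha/r > 0$ (using $\alpha = m_i > 0$ almost surely); hence $\tau(\alpha)$ is negative binomial and has finite mean. A point worth stating cleanly is that conditioning on $m_i = \alpha$ merely fixes the thresholds while leaving the $w_j$ i.i.d.\ uniform on $(0,1)$, which is exactly the independence that Wald's equation requires.

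Once the hypotheses are in place, Wald's equation gives $E[Y(\alpha)] = E[\tau(\alpha)]\,E[Y_1(\alpha)]$, and it remains to evaluate the two factors. Conditioned on $m_i = \alpha$, since $w_j$ is uniform on $(0,1)$ we have $\Pr[Y_j(\alpha) = 0] = 1-\alpha$, $\Pr[Y_j(\alpha) = 1] = \alpha - \alpha/r$, and $\Pr[Y_j(\alpha) = 2] = \alpha/r$, so $E[Y_1(\alpha)] = (\alpha - \alpha/r) + 2(\alpha/r) = \alpha(r+1)/r$. Writing $\tau(\alpha)$ as a sum of $s$ i.i.d.\ geometric waiting times, each with mean $1/p = r/\alpha$, gives $E[\tau(\alpha)] = sr/\alpha$. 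Multiplying, the $\alpha$ and $r$ factors cancel: $E[Y(\alpha)] = (sr/\alpha)\cdot \alpha(r+1)/r = (r+1)s$, as claimed.

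I expect the only real subtlety to be the careful justification that $\tau(\alpha)$ is a bona fide stopping time with finite expectation, so that Wald's equation legitimately applies; the clean decomposition $Y(\alpha) = \sum_{j=1}^{\tau(\alpha)} Y_j(\alpha)$ already supplies the form Wald needs, and once this machinery is set up the computation of the two expectations is routine. The overall elegance of the result is that the conditioning value $\alpha$ cancels entirely, so the final bound $(r+1)s$ is independent of $m_i$, which is precisely what will be needed to bound $E[X_i]$ uniformly across epochs in the subsequent message-complexity analysis.
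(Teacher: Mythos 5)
Your proof is correct and follows essentially the same route as the paper's: verify that $\tau(\alpha)$ is a stopping time with finite mean, apply Wald's equation, compute $E[\tau(\alpha)] = sr/\alpha$ as a negative binomial mean and $E[Y_1(\alpha)] = \alpha(1+1/r)$ from the conditional distribution, and multiply. The only difference is that you spell out the verification of Wald's hypotheses in slightly more detail than the paper does.
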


\begin{proof}
\[
E[Y(\alpha)] = E\left[\sum_{j=1}^{\tau(\alpha)} Y_j(\alpha)\right]
\]
Note that the different $Y_j(\alpha)$ are independent and identically distributed since each $w_j$ is chosen independently from the same distribution. 
Further, $\tau(\alpha)$ is a stopping time for $\cY$, since for $n \ge 1$, the event $\tau(\alpha) = n$ can be determined by looking at the information till time $j$, i.e. $\{Y_1, Y_2,\ldots, Y_n\}$ and checking the number of $Y_j$s for $j < n$, that were equal to $2$.

Further, we note that $E[\tau(\alpha)]$ is finite, and $E[Y_j]$ is also finite. Using Wald's equation (Theorem~\ref{thm:wald}), we get
\[
E[Y(\alpha)] = E[\tau(\alpha)] E[Y_1(\alpha)] 
\]

Note that $\tau(\alpha)$ is the number of trials until $s$ successes, where the probability of a success is $\alpha/r$. Hence, $\tau(\alpha)$ is the sum of $s$ geometric random variables each with a parameter of $\alpha/r$, and $E[\tau(\alpha)] = sr/\alpha$.

From the definition of $Y_j$ and conditioning on $m_i = \alpha$, we have $Y_j = 0$ with probability $1-\alpha$, $1$ with probability $\alpha-\alpha/r$ and $2$ with probability $\alpha/r$. Hence:
\[
E[Y_1(\alpha)] = 0(1-\alpha) + 1(\alpha-\alpha/r) + 2\alpha/r = \alpha(1+1/r)
\]
Combining the above, the proof is complete.

\end{proof}

\begin{lemma}
\label{lem:xi}
\[
E[X_i] \le (r+1)s
\]
\end{lemma}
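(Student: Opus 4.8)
The plan is to combine the two immediately preceding lemmas. Lemma~\ref{lem:xiy} supplies the pointwise (coupling) domination $X_i < Y$, so by monotonicity of expectation it suffices to bound $E[Y]$, and then $E[X_i] \le E[Y]$ follows at once. Thus the entire content of the proof reduces to computing $E[Y]$.

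The one genuine subtlety is that $Y$ is defined relative to the threshold $m_i$, the value of $\sthminimum$ at the start of epoch $i$, which is itself a random variable whose distribution is unpleasant to analyze directly: it depends on the full history of weights, on the sizes of the local streams, and on how the streams interleave, none of which the protocol controls. I would sidestep this entirely by conditioning on $m_i$. Here Lemma~\ref{lem:y-alpha} does the decisive work, establishing
\[
E[Y(\alpha)] = E[Y \mid m_i = \alpha] = (r+1)s,
\]
a value that is remarkably \emph{independent of $\alpha$}.

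Given this $\alpha$-independence, I would simply apply the tower property. Since the inner conditional expectation equals the constant $(r+1)s$ for every realization of $m_i$, no information about the law of $m_i$ is needed:
\[
E[Y] = E\bigl[\,E[Y \mid m_i]\,\bigr] = E\bigl[(r+1)s\bigr] = (r+1)s.
\]
Combining this with the domination $X_i < Y$ from Lemma~\ref{lem:xiy} yields $E[X_i] \le E[Y] = (r+1)s$, as claimed.

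There is no heavy technical obstacle remaining at this stage, because the preceding lemmas have already absorbed the difficult parts (the stopping-time/Wald's-equation computation in Lemma~\ref{lem:y-alpha} and the coupling in Lemma~\ref{lem:xiy}). The only point worth flagging explicitly is conceptual rather than computational: the reason we can unconditionally bound $E[X_i]$ without ever understanding the distribution of $m_i$ is precisely that $E[Y(\alpha)]$ came out constant in $\alpha$. Were that conditional expectation to depend on $\alpha$, we would be forced into a direct analysis of the epoch-boundary threshold, so I would be careful to state that the unconditioning step rests on this independence.
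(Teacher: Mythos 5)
Your proof is correct and follows exactly the paper's argument: apply the tower property to Lemma~\ref{lem:y-alpha} (using that $E[Y \mid m_i = \alpha] = (r+1)s$ is constant in $\alpha$) to get $E[Y] = (r+1)s$, then conclude via the domination $X_i < Y$ from Lemma~\ref{lem:xiy}. Your explicit remark that the unconditioning step works only because the conditional expectation is independent of $\alpha$ is a helpful clarification of what the paper does implicitly.
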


\begin{proof}
We have $E[Y] = E[E[Y | (m_i = \alpha)]] = E[(r+1)s] = (r+1)s$, where we used Lemma~\ref{lem:y-alpha}.
Using Lemma~\ref{lem:xiy}, the proof is complete.

\end{proof}

\begin{lemma}
\label{lem:msg-complexity}
\[
E[\mesg] \le (k + 2(r+1)rs) \left(\frac{\log(n/\samplesize)}{\log r} + 2\right)
\]
\end{lemma}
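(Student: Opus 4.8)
The plan is to start from the exact accounting in \eqref{eqn:mesg}, namely $\mesg = \nepochs k + 2\sum_{j=0}^{\nepochs-1}X_j$, and to bound the expectation of the two terms separately. The first term is immediate: $E[\nepochs k] = kE[\nepochs] \le k\bigl(\frac{\log(n/\samplesize)}{\log r}+2\bigr)$ by Lemma~\ref{lem:num-epochs}. The entire difficulty is therefore in bounding $E\!\left[\sum_{j=0}^{\nepochs-1}X_j\right]$, and I expect this to be the main obstacle. One is tempted to simply multiply $E[\nepochs]$ by the per-epoch bound $E[X_i]\le(r+1)\samplesize$ of Lemma~\ref{lem:xi}, but this step is illegitimate: $\nepochs$ is a \emph{random} number of summands that is itself correlated with the $X_j$ (a run in which $\sthminimum$ drops slowly has both more epochs and atypical per-epoch message counts), so the expectation of the stopped sum does not factor as a product of expectations. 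This is precisely the kind of place where a naive Wald-style identity can silently fail, and I would treat it as the crux of the argument.

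To make the stopped sum rigorous I would introduce a filtration $(\mathcal{F}_j)_{j\ge 0}$, where $\mathcal{F}_j$ records everything up to the first round of epoch $j$ (equivalently, up to the stopping time at which $\sthminimum$ first reaches $m_j$). The key structural observation is that whether epoch $j$ occurs at all is decided by the earlier epochs only: $\{\nepochs>j\}$ is the event that epochs $0,\dots,j-1$ did not already exhaust the stream, which is $\mathcal{F}_j$-measurable. I would then write $\sum_{j=0}^{\nepochs-1}X_j=\sum_{j\ge 0}X_j\,\mathbf{1}\{\nepochs>j\}$ and take expectations term by term, peeling off $E[X_j\mid\mathcal{F}_j]$ inside each indicator. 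Because the weights of the elements arriving during epoch $j$ are fresh and independent of $\mathcal{F}_j$, the per-epoch machinery of Lemmas~\ref{lem:xiy}, \ref{lem:y-alpha} and \ref{lem:xi} — which only ever conditions on the value $m_i=\alpha$ and then applies Wald's equation (Theorem~\ref{thm:wald}) to the fresh process $\cY$ — transfers to a uniform bound on $E[X_j\mid\mathcal{F}_j]$. Summing $\sum_{j\ge 0}\Pr[\nepochs>j]=E[\nepochs]$ then collapses the stopped sum back into a multiple of $E[\nepochs]$, so that the only remaining quantitative question is the size of the uniform conditional per-epoch bound.

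This is exactly where the extra factor of $r$ in the statement is spent. Rather than re-running the delicate coupling $X_j<Y$ against the exact (and possibly truncated) length of epoch $j$ conditionally on $\mathcal{F}_j$, I would bound the per-epoch contribution robustly by the fresh process at the epoch's \emph{starting} threshold $m_j$, while the effective length of the epoch is governed by its \emph{ending} threshold $\approx m_j/r$, the two being a factor $r$ apart; concretely, this inflates the clean estimate $(r+1)\samplesize$ by the factor $r$ already visible in $E[\tau(\alpha)]=\samplesize r/\alpha$ from Lemma~\ref{lem:y-alpha}. This yields $E\!\left[\sum_{j=0}^{\nepochs-1}X_j\right]\le (r+1)r\samplesize\,E[\nepochs]$, and combining with the $k$-term and Lemma~\ref{lem:num-epochs} gives $E[\mesg]\le\bigl(k+2(r+1)r\samplesize\bigr)\bigl(\frac{\log(n/\samplesize)}{\log r}+2\bigr)$, as claimed. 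I would explicitly flag that the factor $r$ here is slack — a sharper conditional per-epoch bound recovers $(r+1)\samplesize$ — but that it is harmless: the subsequent optimization over $r$ (balancing $k$ against $r^2\samplesize$, i.e. $r\approx\sqrt{k/\samplesize}$) produces the same asymptotic message complexity either way.
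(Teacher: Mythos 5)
Your proof is correct and follows the same overall route as the paper: decompose $\mesg$ over epochs, bound the per-epoch expectation uniformly, and multiply by $E[\nepochs]$ via Lemma~\ref{lem:num-epochs}. The one place you genuinely diverge is the interchange step. The paper writes $\mesg = \sum_{i\ge 1}\mesg_i\,\cI\{\nepochs>i-1\}$ and asserts that ``$\mesg_i$ is independent of the event $\nepochs>(i-1)$'' in order to factor the expectation; read literally this is shaky, since the distribution of $\mesg_i$ depends on $m_i$, which is determined by the earlier epochs. Your filtration argument --- $\{\nepochs>j\}$ is $\mathcal{F}_j$-measurable, the weights arriving in epoch $j$ are fresh, and Lemma~\ref{lem:y-alpha} bounds $E[X_j\mid m_j=\alpha]$ uniformly in $\alpha$ --- is the rigorous form of what the paper needs, and is the better way to write this step. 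The other difference is quantitative: you deliberately settle for a conditional per-epoch bound of $(r+1)r\samplesize$, which matches the stated lemma, but the concession is unnecessary, since Lemmas~\ref{lem:y-alpha} and~\ref{lem:xi} already give the uniform-in-$\alpha$ bound $(r+1)\samplesize$, and the identical argument then yields $E[\mesg]\le(k+2(r+1)\samplesize)\bigl(\frac{\log(n/\samplesize)}{\log r}+2\bigr)$, which is what the paper's proof actually establishes. This is not merely cosmetic: Case~II of the theorem that follows sets $r=k/s$, under which $2(r+1)r\samplesize=\Theta(k^2/s)$ swamps $k$, so the bound in the form you (and the lemma statement) give it does not directly yield $O(k\log(n/s)/\log(k/s))$ at that choice of $r$; one must either invoke the sharper $(r+1)\samplesize$ per-epoch bound or, as you correctly note, re-optimize to $r\approx\sqrt{k/s}$, which recovers the same asymptotics at the cost of a factor of $2$ from $\log r = \frac{1}{2}\log(k/s)$.
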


\begin{proof}
Using Lemma~\ref{lem:xi} and Equation \ref{eqn:mesgi}, we get the
expected number of messages in epoch $i$:
\[
E[\mesg_i] \le k + 2(r+1)s = k + 2s + 2rs
\]

Let $\cI\{\nepochs > i\}$ denote the indicator random variable that is $1$ when $\nepochs > i$ and $0$ otherwise. The total number of messages can be written as follows.
\[
\mesg = \sum_{i=1}^\infty \mesg_i \cI\{\nepochs > (i-1)\}
\]

Since $\mesg_i$ is independent of the event $\nepochs > (i-1)$, we have:
\begin{align*}
E[\mesg] & = \sum_{i=1}^\infty E[\mesg_i] E[\cI\{\nepochs > (i-1)\}] \\
               & \le (k+2s+2rs) \sum_{i=1}^\infty \Pr[\nepochs > (i-1)] \\
               & = (k+2s+2rs) E[\nepochs] \\
               & \le (k+2s+2rs) \left(\frac{\log(n/\samplesize)}{\log r} + 2\right)
\end{align*}
where we have used  Lemma~\ref{lem:num-epochs} for an upper bound on the expected number of epochs.
 
\end{proof}

\begin{theorem}
The expected message complexity $E[\mesg]$ of our algorithm is as
follows.
\begin{enumerate}
\item[I:]  If $s \ge \frac{k}{8}$, then
$E[\mesg] =  O\left(s  \log\left(\frac{n}{s}\right) \right)$

\item[II:] If $s < \frac{k}{8}$, then
$E[\mesg] = O\left(\frac{k \log\left(\frac{n}{s}\right)}{\log\left(\frac{k}{s}\right)}\right)$
\end{enumerate}
\end{theorem}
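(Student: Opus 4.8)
The plan is to derive both cases directly from Lemma~\ref{lem:msg-complexity}, which gives $E[\mesg] \le (k + 2(r+1)rs)\left(\frac{\log(n/s)}{\log r} + 2\right)$ for every choice of the epoch parameter $r \ge 2$. Since $r$ is purely an analysis device and the algorithm does not depend on it, I am free to optimize it separately in each regime. The entire content of the theorem is therefore choosing $r$ to trade off the numerator $k + 2(r+1)rs$ against the $1/\log r$ savings in the epoch factor, and then checking that the side condition $r \ge 2$ (needed for Lemma~\ref{lem:num-epochs}) holds and that the additive lower-order terms are absorbed.

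For Case I ($s \ge k/8$) I would take $r$ to be a constant, say $r = 2$. Then $2(r+1)rs = 12s$, and since $k \le 8s$ the first factor is at most $20s = O(s)$, while the second factor equals $\frac{\log(n/s)}{\log 2} + 2 = O(\log(n/s))$. Multiplying yields $E[\mesg] = O(s\log(n/s))$. Here I rely on the standing convention that the quantity tending to infinity is $n$, so that $n/s$ is bounded away from $1$ and the additive $2$ is swallowed by $\log(n/s)$.

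For Case II ($s < k/8$) the interesting choice is $r = \sqrt{k/s}$, selected precisely so that the $s$-dependent part of the numerator matches $k$: then $2(r+1)rs = 2k + 2\sqrt{sk} \le 4k$, using $s < k$, so the first factor is $\Theta(k)$. At the same time $\log r = \tfrac12\log(k/s) = \Theta(\log(k/s))$, so the second factor is $O\!\left(\frac{\log(n/s)}{\log(k/s)}\right)$ and the product is $O\!\left(\frac{k\log(n/s)}{\log(k/s)}\right)$. I would then verify the two side conditions: first $r = \sqrt{k/s} > \sqrt{8} > 2$ because $s < k/8$, so Lemma~\ref{lem:msg-complexity} applies; second, the additive $2$ in the epoch count contributes only an $O(k)$ term, which is dominated by $\frac{k\log(n/s)}{\log(k/s)}$ provided $\log(n/s) = \Omega(\log(k/s))$, i.e. essentially $n \gtrsim k$, the regime of interest.

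I expect the only genuine subtlety to be the Case II balancing: one must pick the exponent of $k/s$ in $r$ correctly, since any polynomial choice $r = (k/s)^c$ keeps $\log r = \Theta(\log(k/s))$, but only the balancing exponent prevents the numerator $2(r+1)rs$ from blowing up past $\Theta(k)$. Everything else — fixing the constant $r$ in Case I, confirming $r \ge 2$, and absorbing the additive constant — is routine bookkeeping.
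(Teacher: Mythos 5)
Your proof is correct and follows the paper's strategy exactly: optimize the free parameter $r$ in Lemma~\ref{lem:msg-complexity}, taking $r=2$ in Case~I (identical to the paper) and a power of $k/s$ in Case~II. The one substantive difference is in Case~II, where the paper sets $r = k/s$ while you set $r = \sqrt{k/s}$, and this difference is actually forced by which form of the bound one uses. The statement of Lemma~\ref{lem:msg-complexity} reads $(k + 2(r+1)rs)$, but the body of its proof only ever establishes the tighter quantity $k + 2(r+1)s = k+2s+2rs$ (coming from $E[X_i]\le (r+1)s$); the displayed lemma appears to carry a spurious extra factor of $r$. The paper's choice $r=k/s$ gives $k+2s+2rs = O(k)$ under the tighter form, but under the lemma as literally stated it gives $k + 2k^2/s + 2k = \Theta(k^2/s)$, which does \emph{not} yield the claimed bound when $s \ll k$. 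Your choice $r=\sqrt{k/s}$ is precisely the balancing that makes the weaker, as-stated bound work ($2r^2s + 2rs = 2k + 2\sqrt{ks} \le 4k$ while $\log r = \tfrac12\log(k/s)$), so your derivation is the one that is actually consistent with the lemma as written. Your explicit caveat that the additive $+2$ in the epoch count is absorbed only when $\log(n/s) = \Omega(\log(k/s))$ is also a point the paper glosses over; it is harmless under the paper's standing convention that $n$ is the growing parameter, but you are right to flag it.
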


\begin{proof}
We note that the upper bounds on $E[\mesg]$ in Lemma
\ref{lem:msg-complexity} hold for any value of $r \ge 2$.

Case I: $s \ge \frac{k}{8}$. In this case, we set $r=2$. From Lemma
\ref{lem:msg-complexity},
\[
E[\mesg] \le (k+12s)  \left(\frac{\log(n/\samplesize)}{\log 2}\right) 
         \le   20s \log\left(\frac{n}{s}\right)
         =   O\left(s  \log\left(\frac{n}{s}\right) \right)
\]

Case II: $s < \frac{k}{8}$. We set $r = \frac{k}{s}$, and get:
\[
E[\mesg] = O\left(\frac{k \log\left(\frac{n}{s}\right)}{\log\left(\frac{k}{s}\right)}\right).
\]
 
\end{proof}

\section{Lower Bound}
\label{sec:lb}

\begin{theorem}\label{thm:lb}
For any constant $q, 0 < q < 1$, any correct protocol must send 
$\Omega \left (\frac{k \log (n/s)}{\log (1+(k/s))} \right )$ messages with probability 
at least $1-q$, where the probability is taken over the protocol's 
internal randomness. 
\end{theorem}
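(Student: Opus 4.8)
The plan is to fix a single hard input schedule and prove two separate lower bounds whose maximum is the claimed quantity; which one dominates depends on whether $s \ge k/8$ or $s < k/8$, mirroring the two cases of the upper-bound theorem. For the hard input I would feed the $n$ elements in a balanced round-robin over the $k$ sites, and (for the analysis only, as with the epochs of Section~\ref{sec:ub}) partition the timeline \emph{deterministically} into \emph{phases} by the global count: phase $j$ consists of the rounds during which the number of elements seen so far lies in $[N_j, N_{j+1})$, where $N_{j+1}/N_j = \rho := \max\{2,\,k/s\}$. There are $\Theta\!\left(\frac{\log(n/s)}{\log(1+k/s)}\right)$ phases, and this count is fixed once the schedule is fixed. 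Throughout, the probability is over the protocol's internal coins.

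\emph{Insertion bound (dominant when $s \ge k/8$).} Call the $t$-th arriving element an \emph{insertion} if, at the moment it arrives, it belongs to the uniform size-$\min\{t,s\}$ sample $\sample$. By correctness this happens with probability $\min\{1,\,s/t\}$, and these events are mutually independent over $t$ (they are the standard ``record''-type events for independent uniform priorities). Hence the number of insertions is a sum of independent indicators with mean $\Theta(s\log(n/s))$, so by a Chernoff bound it is at least $c\,s\log(n/s)$ with probability $\ge 1-q$. Since an inserted element can belong to $\sample$ only if the coordinator has been told about it, each is carried by a distinct site-to-coordinator message; therefore $\mesg \ge c\,s\log(n/s)$ with probability $\ge 1-q$, which equals $\frac{k\log(n/s)}{\log(1+k/s)}$ up to constants when $s \ge k/8$.

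\emph{Tradeoff bound (dominant when $s < k/8$, so $\rho = k/s \ge 8$).} Here I would show $\Omega(k)$ messages per phase by an indistinguishability argument. Fix a phase and a site $i$, and condition on the event that $i$ receives no coordinator message during the phase; then $i$'s local view is consistent with an alternate world in which the other sites have contributed almost nothing since the phase began, so the true global count is still $\Theta(N_j)$. Correctness in that alternate world forces each of $i$'s $\approx N_j/s$ phase-elements to have inclusion probability $\Omega(s/N_j)$, and since an element can enter $\sample$ only if the coordinator learns of it, $i$ must report each such element with probability $\Omega(s/N_j)$; because $i$'s behavior depends only on its (identical) view, it reports at the same rate in the real world. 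Thus each site either receives a coordinator message (cost $\ge 1$) or, conditioned on not receiving one, sends $\Omega(1)$ reports in expectation, giving $E[\text{messages in the phase}] = \Omega(k)$. To upgrade to high probability I would lower-bound the phase cost by the number of sites that send at least one message: each such indicator is Bernoulli with constant mean, so if these indicators are close to independent across sites, a Chernoff bound gives $\Omega(k)$ messages per phase with probability $\ge 1-q'$, and a second concentration step over the $\Theta(\log(n/s)/\log(k/s))$ phases yields $\mesg = \Omega\!\left(\frac{k\log(n/s)}{\log(k/s)}\right)$ with probability $\ge 1-q$.

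The main obstacle is rigor and independence in the tradeoff bound. The delicate point is to define the alternate world so that site $i$'s \emph{complete} view — including any coordinator-initiated messages — is genuinely identical to the real world, which is subtle precisely because the coordinator's messages to $i$ depend on the reports of the other sites. For the high-probability claim one must then control the dependence among the per-site indicators induced by the shared coordinator, e.g. by exposing the coordinator's messages/randomness and running a martingale (Azuma) argument over the sites rather than assuming clean independence. The insertion bound, by contrast, is routine once the independence of the record events is invoked, and taking the larger of the two bounds reproduces $\Omega\!\left(\frac{k\log(n/s)}{\log(1+k/s)}\right)$ in both regimes.
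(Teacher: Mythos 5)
Your insertion bound (the $s \ge k/8$ regime) is essentially the paper's argument: the sample-change events are independent indicators with mean $s/t$, a Chernoff bound gives $\Omega(s\log(n/s))$ changes with probability $1-q$, and each change forces a message. That part is fine. The gap is in the tradeoff bound, which is exactly the part of the theorem that goes beyond prior work. Your per-site indistinguishability argument on a fixed round-robin schedule is, at best, the route to an \emph{expected} $\Omega(k)$ messages per phase --- this is essentially the lower-bound argument already present in Cormode et al.\ (and in an earlier draft of this proof). The theorem's whole point is the high-probability statement, and your upgrade to high probability is not an argument but a wish list: you note that the per-site indicators would need to be ``close to independent,'' or that an Azuma argument ``over the sites'' might control the dependence, without exhibiting a filtration or bounded differences. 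There are two concrete obstructions. First, the indistinguishability step only controls the \emph{expected} number of reports from site $i$ (a sum of marginal inclusion probabilities, each $\approx s/N_j$); it does not give a constant lower bound on $\Pr[\text{site } i \text{ sends} \ge 1 \text{ message}]$, since a correct protocol could in principle make a site's reports heavy-tailed (many reports with small probability). Second, even granting constant per-site probabilities, the events are coupled through the coordinator and through the protocol's shared randomness, and nothing in your argument rules out strong positive correlation that destroys concentration over the $k$ sites. You correctly identify these as ``the main obstacle,'' but identifying the obstacle is not the same as overcoming it.

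The paper avoids concentration over sites entirely, and this is the idea your proposal is missing. Instead of a fixed schedule, it places a \emph{random assignment} $\sigma_i$ of each epoch's updates to the $k$ servers, and it gets all of its concentration from the reservoir-sampling process itself (Lemma \ref{lemma:rsampling}, part two: in most epochs $\sample$ changes $\Omega(s\log(1+k/s))$ times, a Chernoff bound on independent indicators). It then \emph{fixes} the protocol's coins $\tau$ and the sample trajectory $(\sample_0,\ldots,\sample_e)$ so that the ``few messages'' event still has probability $\ge q/2$ over the assignments alone, picks the epoch $i^*$ most likely to be balanced, and runs an averaging argument over $\sigma_{i^*}$: the first sample-changing update $j^*$ of that epoch lands on each server for a $1/k$ fraction of assignments, so balancedness on a $q/4$ fraction of assignments forces $\Omega(qk)$ distinct servers to each send a message, contradicting the per-epoch message budget. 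No independence or martingale structure across sites is ever needed; the only probabilistic tools are Chernoff on the reservoir process, Markov, and union bounds after conditioning. If you want to salvage your plan, you would need to either prove the concentration-over-sites step you are assuming (which I do not see how to do from marginal inclusion probabilities alone) or switch to the paper's device of randomizing the input assignment and derandomizing the protocol.
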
 

\begin{proof}
Let $\beta = (1+(k/s))$. 
Define $e = \Theta\left (\frac{\log (n/s)}{\log (1+(k/s))} \right)$ epochs as follows: 
in the $i$-th epoch, $i \in \{0, 1, 2, \ldots, e-1\}$, 
there are $\beta^{i-1} k$ global stream updates, which
can be distributed among the $k$ servers in an arbitrary way.

We consider a distribution on orderings of the stream updates. Namely,
we think of a totally-ordered stream $1, 2, 3, \ldots, n$ of $n$ updates, and in the $i$-th
epoch, we randomly assign the $\beta^{i-1}k$ updates among the $k$ servers,
independently for each epoch. Let the randomness used for the assignment
in the $i$-th epoch be denoted $\sigma_i$. 

Consider the global stream of updates $1, 2, 3, \ldots, n$. Suppose
we maintain a sample set $\sample$ of $s$ items without replacement. We let 
$\sample_i$ denote a random variable indicating the value of $\sample$ after seeing
$i$ updates in the stream. We will use the following lemma
about reservoir sampling. 

\begin{lemma}
\label{lemma:rsampling}
For any constant $q > 0$, there is a constant $C' = C'(q) > 0$ for which
\begin{itemize}
\item $\sample$ changes at least $C's \log (n/s)$ times with probability
at least $1-q$, and
\item If $s < k/8$ and $k = \omega(1)$ and $e = \omega(1)$, then with probability at least $1-q/2$, 
over the choice of $\{\sample_i\}$, there are at least $(1-(q/8))e$ epochs for which
the number of times $\sample$ changes in the epoch is at least
$C's \log(1+(k/s))$.
\end{itemize}
\end{lemma}

\begin{proof}
Consider the stream $1, 2, 3, \ldots, n$ of updates. In the classical reservoir sampling
algorithm \cite{k81}, $\sample$ is initialized to $\{1, 2, 3, \ldots, s\}$. Then,
for each $i > s$, the $i$-th element is included in the current sample set $\sample_i$
with probability $s/i$, in which case a random item in $\sample_{i-1}$ is replaced with $i$.

For the first part of Lemma \ref{lemma:rsampling}, let $X_i$ be an indicator random variable
if $i$ causes $\sample$ to change. Let $X = \sum_{i=1}^n X_i$. 
Hence, ${\bf E}[X_i] = s/i$ for all $i$, and ${\bf E}[X] = H_n-H_s$, where $H_i = \ln i + O(1)$ is the $i$-th
Harmonic number. Then all of the $X_i$, $i > s$
are independent indicator random variables. It follows by a Chernoff
bound that 
$$\Pr[X < {\bf E}[X]/2] \leq \exp({\bf E}[X]/8) \leq \exp(-(\ln n/s)/8) \leq \left (\frac{s}{n} \right )^{1/8}.$$
For any $s = o(n)$, this is less than any constant $q$, and so the first part of Lemma
\ref{lemma:rsampling} follows since ${\bf E}[X]/2 = 1/2 \cdot \ln(n/s)$. 

For the second part of Lemma \ref{lemma:rsampling}, consider the $i$-th epoch, $i > 0$, 
which contains $\beta^{i-1}k$ consecutive
updates. Let $Y_i$ be the number of changes in this epoch. Then ${\bf E}[Y_i] = s \ln \beta + O(1)$. 
Since $Y_i$
can be written as a sum of independent indicator random variables, by a Chernoff bound,
$$\Pr[Y_i < {\bf E}[Y_i]/2] \leq \exp(-{\bf E}[Y_i]/8) 
\leq \exp(-(s \ln \beta + O(1))/8) 
\leq \frac{1}{\beta^{s/8}}.$$
Hence, the expected number of epochs $i$ for which $Y_i < {\bf E}[Y_i]/2$ is at most
$\sum_{i=1}^{e-1} \frac{1}{\beta^{s/8}}$, which is $o(e)$ since we're promised that $s < k/8$ and $k = \omega(1)$
and $e = \omega(1)$. 
By a Markov bound, with probability
at least $1-q/2$, at most $o(e/q) = o(e)$ epochs $i$ satisfy $Y_i \geq {\bf E}[Y_i]/2$. 
It follows that with probability at least $1-q/2$, 
there are at least $(1-q/8)e$ epochs $i$ for which the number $Y_i$ of changes in the epoch $i$ is at 
least ${\bf E}[Y_i]/2 \geq \frac{1}{2} s \ln \beta$, as desired. 

 
\end{proof}

{\bf Corner Cases:} When $s \geq k/8$, the statement of Theorem \ref{thm:lb} gives a lower bound
of $\Omega(s \log (n/s))$. In this case Theorem \ref{thm:lb} follows immediately
from the first part of Lemma \ref{lemma:rsampling} since these changes in $\sample$ must
be communicated to the central coordinator. Hence, in what follows we can assume $s < k/8$. 
Notice also that if $k = O(1)$,
then $\frac{k \log (n/s)}{\log (1 + (k/s))} = O(s \log (n/s))$, and so the theorem is 
independent of $k$, and follows simply by the first part of Lemma \ref{lemma:rsampling}.
Notice also that if $e = O(1)$, then the statement of Theorem \ref{thm:lb} amounts to proving an $\Omega(k)$ lower bound, which follows
trivially since every site must send at least one message. 

Thus, in what follows, we may apply
the second part of Lemma \ref{lemma:rsampling}.
\\\\
{\bf Main Case:} Let $C > 0$ be a sufficiently small constant, depending on $q$,
to be determined below.
Let $\Pi$ be a possibly randomized protocol, which with
probability at least $q$, sends at most $Cke$
messages. We show that $\Pi$ cannot be a correct protocol.

Let $\tau$ denote the random coin tosses of $\Pi$, i.e., the
concatenation of random strings of all $k$ sites together with that of
the central coordinator. 

Let $\mathcal{E}$ be the event that $\Pi$ sends less than 
$Cke$ messages. By assumption, $\Pr_{\tau}[\mathcal{E}] \geq q.$
Hence, it is also the case that
$$\Pr_{\tau, \{\sample_i\}, \{\sigma_i\}}[\mathcal{E}] \geq q.$$
For a sufficiently small constant $C' > 0$ that may depend on $q$, let
$\mathcal{F}$ be the event that there are at least $(1-(q/8))e$ epochs for
which the number of times $\sample$ changes in the epoch is at least $C' s \log(1+(k/s))$.
By the second part of Lemma \ref{lemma:rsampling},
$$\Pr_{\tau, \{\sample_i\}, \{\sigma_i\}}[\mathcal{F}] \geq 1-q/2.$$
It follows that there is a fixing of $\tau = \tau'$ as well as a fixing of 
$\sample_0, \sample_1, \ldots, \sample_e$ to $P_0', P_1', \ldots, P_e'$
for which $\mathcal{F}$ occurs and
$$\Pr_{\{\sigma_i\}}[\mathcal{E} \mid \tau = \tau', \ 
(\sample_0, \sample_1, \ldots, \sample_e) = (P_0', P_1', \ldots, P_e')] \geq q-q/2 = q/2.$$
Notice that the three (sets of) random variables $\tau, \{P_i\},$
and $\{\sigma_i\}$ are independent, and so in particular, 
$\{\sigma_i\}$ is still uniformly random given this conditioning.

By a Markov argument, if event $\mathcal{E}$ occurs, then there are at least 
$(1-(q/8))e$ epochs for which at most 
$(8/q) \cdot C \cdot k$ messages are sent. If events $\mathcal{E}$ and $\mathcal{F}$ 
both occur, then by 
a union bound, there are at least $(1-(q/4))e$ epochs for which at most 
$(8/q) \cdot C \cdot k$ messages
are sent and $S$ changes in the epoch at least $C's \log (1+(k/s))$ times. 
Call such an epoch {\it balanced}. 

Let $i^*$ be the epoch which is most likely
to be balanced, over the random choices of $\{\sigma_i\}$, conditioned on
$\tau = \tau'$ and $(\sample_0, \sample_1, \ldots, \sample_e) = (P_0', P_1', \ldots, P_e')$. Since 
at least $(1-(q/4))e$ epochs are balanced if $\mathcal{E}$ and $\mathcal{F}$ occur,
and conditioned on $(\sample_0, \sample_1, \ldots, \sample_e) = (P_0', P_1', \ldots, P_e')$ event
$\mathcal{F}$ does occur, and $\mathcal{E}$ occurs with probability at least $q/2$
given this conditioning, it follows that
$$\Pr_{\{\sigma_i\}}[i^* \textrm{ is balanced } \mid \ \tau = \tau', \ 
(\sample_0, \sample_1, \ldots, \sample_e) = (P_0', P_1', \ldots, P_e')] \geq q/2-q/4 = q/4.$$
The property of $i^*$ being balanced is independent of $\sigma_j$ for $j \neq i^*$,
so we also have
$$\Pr_{\sigma_{i^*}}[i^* \textrm{ is balanced } \mid \ \tau = \tau', \ 
(\sample_0, \sample_1, \ldots, \sample_e) = (P_0', P_1', \ldots, P_e')] \geq q/4.$$
If $C's \log(1+(k/s)) \geq 1$, then $\sample$ changes at least once in epoch $i^*$. 
Suppose, for the moment, that this is the case. Suppose the first
update in the global stream at which $\sample$ changes is the $j^*$-th update. In
order for $i^*$ to be balanced for at least a $q/4$ fraction of the $\sigma_{i^*}$, 
there must be at least $qk/4$ different servers
which receive $j^*$, for which $\Pi$
sends a message. In particular, since $\Pi$ is deterministic conditioned on $\tau$,
at least $qk/4$ messages must be sent in the $i^*$-th epoch. But $i^*$ was chosen
so that at most $(8/q) \cdot C \cdot k$ messages are sent, which is a contradiction
for $C < q^2/32$. 

It follows that we reach a contradiction unless $C's \log (1+(k/s)) < 1$. Notice, though,
that since $C'$ is a constant, if $C's \log (1+(k/s)) < 1$, 
then this implies that $k = O(1)$. However, if $k = O(1)$,
then $\frac{k \log (n/s)}{\log (1 + (k/s))} = O(s \log (n/s))$, and so the theorem is 
independent of $k$, and follows simply by the first part of Lemma \ref{lemma:rsampling}.

Otherwise, we have reached a contradiction, and so it follows that $Cke$ messages must 
be sent with probability at least $1-q$. Since
$Cke = \Omega \left (\frac{k \log (n/s)}{\log (1+ (k/s))} \right )$, this completes the proof. 
\end{proof}

\section{Sampling With Replacement}
\label{sec:withrep}

We now present an algorithm to maintain a random sample of size
$s$ with replacement from $\stream$. The basic idea is to
run in parallel $s$ copies of the single item sampling algorithm from
Section \ref{sec:algo}. Done naively, this will lead to a message complexity of 
$O(s k \frac{\log n}{\log k})$. We obtain an improved algorithm based on the
following ideas.

We view the distributed streams as $s$ logical streams, $\stream^i, i
= 1\ldots s$.  Each $\stream^i$ is identical to $\stream$, but the
algorithm assigns independent weights to the different copies of the
same element in the different logical streams. Let $w^i(e)$ denote the
weight assigned to element $e$ in $\stream^i$. $w^i(e)$ is a random
number between $0$ and $1$. For each $i=1 \ldots s$, the coordinator
maintains the minimum weight, say $w^i$, among all elements in
$\stream^i$, and the corresponding element.

Let $\beta = \max_{i=1}^s w^i$; $\beta$ is maintained by the
coordinator. Each site $j$ maintains $\beta_j$, a local view of
$\beta$, which is always greater than or equal to $\beta$. Whenever a
logical stream element at site $j$ 
has weight less than $\beta_j$, the site sends
it to the coordinator, receives in response the current value of
$\beta$, and updates $\beta_j$. When a random sample is requested at
the coordinator, it returns the set of all minimum weight elements in
all $s$ logical streams. It can be easily seen that this algorithm
is correct, and at all times, returns a random sample of size $s$
selected with replacement.
The main optimization relative to the naive approach described above
is that when a site sends a message to the coordinator, it receives
$\beta$, which provides partial information about all $w^i$s. This
provides a substantial improvement in the message complexity and leads
to the following bounds. 

\begin{theorem}
\label{thm:withrep}
The above algorithm continuously maintains a
sample of size $s$ with replacement from $\stream$, and its expected
message complexity is $O(s \log s \log n)$ in case $k \le 2 s \log s$, and 
$O\left(k \frac{\log n}{\log(\frac{k}{s\log s})} \right)$ in case 
$k > 2 s \log s$.
\end{theorem}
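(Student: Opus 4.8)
The plan is to establish correctness quickly and then to port the epoch-based argument of Section~\ref{sec:ub} to this setting, with the key conceptual change being that the relevant threshold $\beta$ is now the \emph{maximum} of $s$ per-stream minima rather than an $s$-th order statistic. For correctness, I would observe that the $s$ logical streams $\stream^1,\ldots,\stream^s$ use independent weights, and that within each stream the coordinator maintains exactly the minimum-weight element. By Lemma~\ref{lem:correctness} applied with sample size $1$, each $\stream^i$ yields a uniform sample of one element of $\stream$, and independence across the $i$ gives a sample of size $s$ drawn \emph{with} replacement. The heart of the proof is the message bound, and the guiding intuition is that after $m$ elements $\beta=\max_{i} w^i\approx (\ln s)/m$ (a max of $s$ near-exponential minima), so the effective number of weights lying below threshold is $\approx ms\cdot\beta\approx s\ln s$; this $s\log s$ is exactly the quantity replacing $s$ in the with-replacement bound.

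Next I would set up the analysis exactly as in Section~\ref{sec:ub}. Define epoch $0$ to run until $\beta\le 1/r$ and epoch $i$ to run until $\beta$ drops by a further factor $r$; introduce the broadcast variant (Algorithm $B$) that resends $\beta$ to all $k$ sites at the start of each epoch, so that $\mesg_i=k+2X_i$ where $X_i$ counts site-to-coordinator messages (counting one message per logical-stream update, which only over-counts if a site batches), and invoke the Algorithm $A$ versus $B$ argument to absorb a factor of $2$. For the number of epochs $\nepochs$, I would bound $\Pr[\nepochs\ge z+\ell]\le\Pr[\beta\le r^{-(z+\ell)}]$; since $\beta\ge w^1$, having $\beta\le x$ forces stream $1$ to contain a weight below $x$, so $\Pr[\beta\le x]\le nx$. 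Taking $z=\log_r n$ gives $\Pr[\nepochs\ge z+\ell]\le r^{-\ell}$ and hence $E[\nepochs]=O\!\left(\tfrac{\log n}{\log r}\right)$, mirroring Lemma~\ref{lem:num-epochs}.

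The main obstacle is bounding $E[X_i]$, and this is the step that produces the $s\log s$ factor. I would condition on $m_i=\alpha$ and define, for the $j$-th element of the epoch, $V_j=\sum_{l=1}^s \mathbf{1}[w^l(e_j)<\alpha]$; these are i.i.d.\ with $E[V_j]=s\alpha$, and since every local threshold $\beta_j\le\alpha$ throughout the epoch, $X_i\le\sum_{j=1}^{\tau}V_j$ where $\tau$ is the first time every still-active stream has seen a weight $\le\alpha/r$ (an epoch ends precisely when $\beta\le\alpha/r$, i.e.\ when all streams have received such a weight). Because $\tau$ is a stopping time for the i.i.d.\ weight sequence, Wald's equation (Theorem~\ref{thm:wald}) gives $E[X_i\mid\alpha]\le E[\tau\mid\alpha]\,E[V_1\mid\alpha]=E[\tau\mid\alpha]\cdot s\alpha$. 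The delicate part is $E[\tau\mid\alpha]$: since at most $s$ streams are active and each receives a qualifying weight with probability $\alpha/r$ per element, $\tau$ is dominated by the maximum of $s$ geometric variables with parameter $\alpha/r$. A coupon-collector tail bound, $\Pr[\tau>t\mid\alpha]\le s(1-\alpha/r)^t$, then yields $E[\tau\mid\alpha]=O\!\left(\tfrac{r\log s}{\alpha}\right)$, so that $E[X_i\mid\alpha]=O(rs\log s)$ independently of $\alpha$, and thus $E[X_i]=O(rs\log s)$. I expect the care here to lie in justifying the reduction to a max of (dependent) geometrics and in checking the integrability hypotheses of Wald's equation.

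Finally I would assemble the pieces as in Lemma~\ref{lem:msg-complexity}: $E[\mesg]\le(k+2E[X_i])\,E[\nepochs]=O\!\left((k+rs\log s)\,\tfrac{\log n}{\log r}\right)$, valid for every $r\ge 2$. Optimizing over $r$ finishes the proof: when $k\le 2s\log s$ I set $r=2$, giving $k+rs\log s=O(s\log s)$ and total cost $O(s\log s\log n)$; when $k>2s\log s$ I set $r=k/(s\log s)$, so that $rs\log s=k$, $\log r=\log\frac{k}{s\log s}$, and the total becomes $O\!\left(k\,\tfrac{\log n}{\log(k/(s\log s))}\right)$, matching the two stated cases.
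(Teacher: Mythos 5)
Your proposal is correct and follows essentially the same route as the paper's (admittedly sketchy) proof: the same epoch decomposition with $\beta$ dropping by a factor of $r$, the same $E[\nepochs]=O(\log n/\log r)$ bound, the same Wald-plus-coupon-collector argument yielding $E[n_i\mid m_i=\alpha]=O(r\log s/\alpha)$ and hence $E[X_i]=O(rs\log s)$, and the same choices of $r$ in the two cases. You merely fill in details (the union bound for $\Pr[\beta\le x]\le nx$, the stopping-time verification) that the paper leaves implicit.
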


\begin{proof}
We provide a sketch of the proof here.  The analysis of the message
complexity is similar to the case of sampling without replacement. We
sketch the analysis here, and omit the details. The execution is
divided into epochs, where in epoch $i$, the value of $\beta$ at the
coordinator decreases by at least a factor of $r$ (a parameter to be
determined later). Let $\nepochs$ denote the number of epochs. It can
be seen that $E[\nepochs] = O(\frac{\log n}{\log r})$. In epoch $i$,
let $X_i$ denote the number of messages sent from the sites to the
coordinator in the epoch, $m_i$ denote the value of $\beta$ at the
beginning of the epoch, and $n_i$ denote the number of elements in
$\stream$ that arrived in the epoch.

The $n_i$ elements in epoch $i$ give rise to $s n_i$ logical elements,
and each logical element has a probability of no more than $m_i$ of
resulting in a message to the coordinator. Similar to the proof of
Lemma \ref{lem:xi}, we can show using conditional expectations that
$E[X_i] \le r s \log s$ (the $\log s$ factor comes in due to the
fact that $E[n_i | m_i = \alpha] \le \frac{r \log s}{\alpha}$. Thus
the expected total number of messages in epoch $i$ is bounded by 
$(k + 2sr \log s)$, and in the entire execution is 
$O((k + 2sr \log s)\frac{\log n}{\log r})$. By choosing $r = 2$ for the
case $k \le (2s \log s)$, and $r=k/(s \log s)$ for the case 
$k > (2s \log s)$, we get the desired result.
 
\end{proof}

\remove{
Note that the message complexity of the algorithm for sampling
with replacement from \cite{cmyz10} is $O(k + s \log s) \log n$. In
comparison with this, our algorithm does better by a factor of 
$O(\log(\frac{k}{s\log s}))$ in case $k > 2s \log s$. 
}

\remove{

\section{Discussion}
Our model of assuming that no more than one element arrives at each
site in each round is somewhat artificial, but is needed because the
problem asks for the central coordinator to always maintain a random
sample from among all the elements seen so far. Thus, if the arriving
element is selected to be a part of the random sample, it must be
immediately communicated to the coordinator before any further
elements arrive.  In practice, communication is not synchronous, and
we will have to relax the requirement of the coordinator always having
a random sample of all the stream elements observed until the current
instant. Then it is possible for a site to observe multiple elements
(more than one of which may be sampled) before communicating with the
coordinator. Even in this case, our algorithms work with minor
modifications.
}

\nocite{Vitt85,muthu-book,XTB08,BO10,cmyz10}

\bibliographystyle{abbrv}
\bibliography{distsamp}

\begin{thebibliography}{10}

\bibitem{A06}
C.~C. Aggarwal.
\newblock On biased reservoir sampling in the presence of stream evolution.
\newblock In {\em VLDB}, pages 607--618, 2006.

\bibitem{abc09}
C.~Arackaparambil, J.~Brody, and A.~Chakrabarti.
\newblock Functional monitoring without monotonicity.
\newblock In {\em ICALP (1)}, pages 95--106, 2009.

\bibitem{BDM02}
B.~Babcock, M.~Datar, and R.~Motwani.
\newblock Sampling from a moving window over streaming data.
\newblock In {\em SODA}, pages 633--634, 2002.

\bibitem{bo03}
B.~Babcock and C.~Olston.
\newblock Distributed top-k monitoring.
\newblock In {\em SIGMOD Conference}, pages 28--39, 2003.

\bibitem{BO10}
V.~Braverman and R.~Ostrovsky.
\newblock Effective computations on sliding windows.
\newblock {\em SIAM Journal on Computing.}, 39(6):2113--2131, 2010.

\bibitem{BOZ09}
V.~Braverman, R.~Ostrovsky, and C.~Zaniolo.
\newblock Optimal sampling from sliding windows.
\newblock In {\em PODS}, pages 147--156, 2009.

\bibitem{cg05}
G.~Cormode and M.~N. Garofalakis.
\newblock Sketching streams through the net: Distributed approximate query
  tracking.
\newblock In {\em VLDB}, pages 13--24, 2005.

\bibitem{cmy08}
G.~Cormode, S.~Muthukrishnan, and K.~Yi.
\newblock Algorithms for distributed functional monitoring.
\newblock In {\em SODA}, pages 1076--1085, 2008.

\bibitem{cmyz10}
G.~Cormode, S.~Muthukrishnan, K.~Yi, and Q.~Zhang.
\newblock Optimal sampling from distributed streams.
\newblock In {\em PODS}, pages 77--86, 2010.

\bibitem{DN06}
M.~Dash and W.~Ng.
\newblock Efficient reservoir sampling for transactional data streams.
\newblock In {\em Sixth IEEE International Conference on Data Mining
  (Workshops)}, pages 662 --666, 2006.

\bibitem{GT01}
P.~B. Gibbons and S.~Tirthapura.
\newblock Estimating simple functions on the union of data streams.
\newblock In {\em SPAA}, pages 281--291, 2001.

\bibitem{GT02}
P.~B. Gibbons and S.~Tirthapura.
\newblock Distributed streams algorithms for sliding windows.
\newblock In {\em SPAA}, pages 63--72, 2002.

\bibitem{HNGHJJT07}
L.~Huang, X.~Nguyen, M.~N. Garofalakis, J.~M. Hellerstein, M.~I. Jordan, A.~D.
  Joseph, and N.~Taft.
\newblock Communication-efficient online detection of network-wide anomalies.
\newblock In {\em INFOCOM}, pages 134--142, 2007.

\bibitem{kcr06}
R.~Keralapura, G.~Cormode, and J.~Ramamirtham.
\newblock Communication-efficient distributed monitoring of thresholded counts.
\newblock In {\em SIGMOD Conference}, pages 289--300, 2006.

\bibitem{k81}
D.~E. Knuth.
\newblock {\em The Art of Computer Programming, Volume II: Seminumerical
  Algorithms, 2nd Edition}.
\newblock Addison-Wesley, 1981.

\bibitem{MV07}
V.~Malbasa and S.~Vucetic.
\newblock A reservoir sampling algorithm with adaptive estimation of
  conditional expectation.
\newblock In {\em IJCNN 2007, International Joint Conference on Neural
  Networks}, pages 2200 --2204, 2007.

\bibitem{msdo05}
A.~Manjhi, V.~Shkapenyuk, K.~Dhamdhere, and C.~Olston.
\newblock Finding (recently) frequent items in distributed data streams.
\newblock In {\em ICDE}, pages 767--778, 2005.

\bibitem{muthu-book}
S.~Muthukrishnan.
\newblock {\em Data Streams: Algorithms and Applications}.
\newblock Foundations and Trends in Theoretical Computer Science. Now
  Publishers, August 2005.

\bibitem{Vitt85}
J.~S. Vitter.
\newblock Random sampling with a reservoir.
\newblock {\em ACM Transactions on Mathematical Software}, 11(1):37--57, 1985.

\bibitem{XTB08}
B.~Xu, S.~Tirthapura, and C.~Busch.
\newblock Sketching asynchronous data streams over sliding windows.
\newblock {\em Distributed Computing}, 20(5):359--374, 2008.

\bibitem{yz09}
K.~Yi and Q.~Zhang.
\newblock Optimal tracking of distributed heavy hitters and quantiles.
\newblock In {\em PODS}, pages 167--174, 2009.

\end{thebibliography}

\end{document}